\title{Estimation with Pairwise Observations} 
\author[1]{Felix Chan}
\author[2]{László Mátyás}
\affil[1]{Curtin University, Perth, Australia}
\affil[2]{Central European University, Budapest, Hungary and Vienna, Austria}
\date{\today} 
\begin{document}
\maketitle

\vskip 2cm\noindent 
\thanks{{\bf Acknowledgement:} The authors would like to thank Tom Wansbeek for insightful discussion. Contribution by György Ruzicska in the early stages of this project and Kristóf Reizinger's invaluable coding work for the simulations are also kindly acknowledged.}
\vskip 0.5cm

\begin{abstract}
    \noindent The paper introduces a new estimation method for the standard linear regression model.  The procedure is not driven by the optimisation of any objective function rather, it is a simple weighted average of slopes from observation pairs. The paper shows that such estimator is consistent for carefully selected weights. Other properties, such as asymptotic distributions, have also been derived to facilitate valid statistical inference. Unlike traditional methods, such as Least Squares and Maximum Likelihood, among others, the estimated residual of this estimator is not by construction orthogonal to the explanatory variables of the model.  This property allows a wide range of practical applications, such as the testing of endogeneity, i.e.,the correlation between the explanatory variables and the disturbance terms.  
     
    \begin{keywords}{Linear regression model, consistent estimation, endogeneity, testing for endogeneity.}
    \end{keywords}
   \begin{JEL} {C01, C13, C20, C26, C51} \end{JEL}
\end{abstract}


\pagebreak

\section{Introduction}
To start with, the paper intents to answer a simple question relating to the estimation of a linear regression model. Let us assume that we have a very basic model with only one explanatory variable:
\begin{equation}\label{eq:dgp}
    y_i = \beta_0 + \beta_1 x_i + u_i, \qquad u_i \sim D(0, \sigma^2_u), \quad i=1,\ldots, n. 
\end{equation}
Assuming the $n$ observations the data can be divided in clusters in such a way that in each cluster there are only two observations. This clustering may be carried out in two different ways:
\begin{itemize} 
\item The $(x_{i},  y_{i})$ observations are ordered based on the values of $x_{i}$ and two adjacent observations are grouped into one cluster. Let us call this the {\it sorted adjacent} approach. There are $n-1$ such pairs. When we do not apply any sorting and just group each adjacent observations into a cluster, we call such approach {\it non-sorted adjacent}. 

\item All feasible combinations  

\begin{equation*}
    [(x_{i}, y_{i}) ; (x_{j},  y_{j})],\,\, (i \neq j)
\end{equation*}

\noindent of the observations are considered to form clusters. There are in total 
$\frac{n(n-1)}{2}$ of such clusters. Let us call this the (non-sorted) {\it full-pairwise approach}.  Here, again we may consider sorting the observations based on the values of $x_{i}$, which gives the {\it sorted full-pairwise approach}. 

\end{itemize}
In each cluster there are only two observations so the ``regression line" in a cluster is the line going through these two observations, and there is no estimation involved. In such a way we can get as many $\beta_0$ and $\beta_1$ parameters as the number of clusters (see Figures 1 and 2 for example).

\begin{figure}[ht]
\centering	
\includegraphics[scale=0.16]{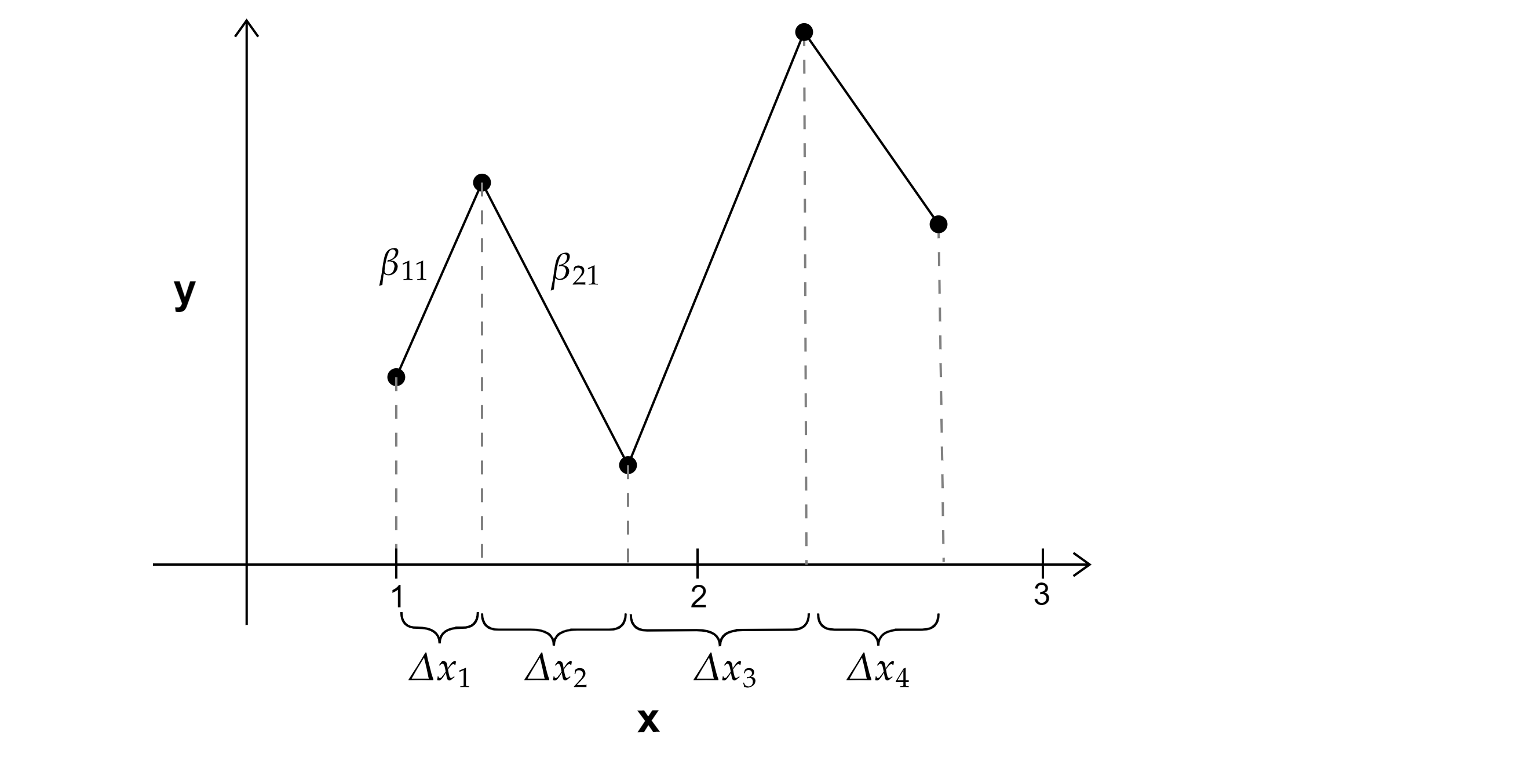}		
\caption{Sorted adjacent}
\label{Sorted}
\end{figure}

\begin{figure}[bt!]
\centering	
\includegraphics[scale=0.18]{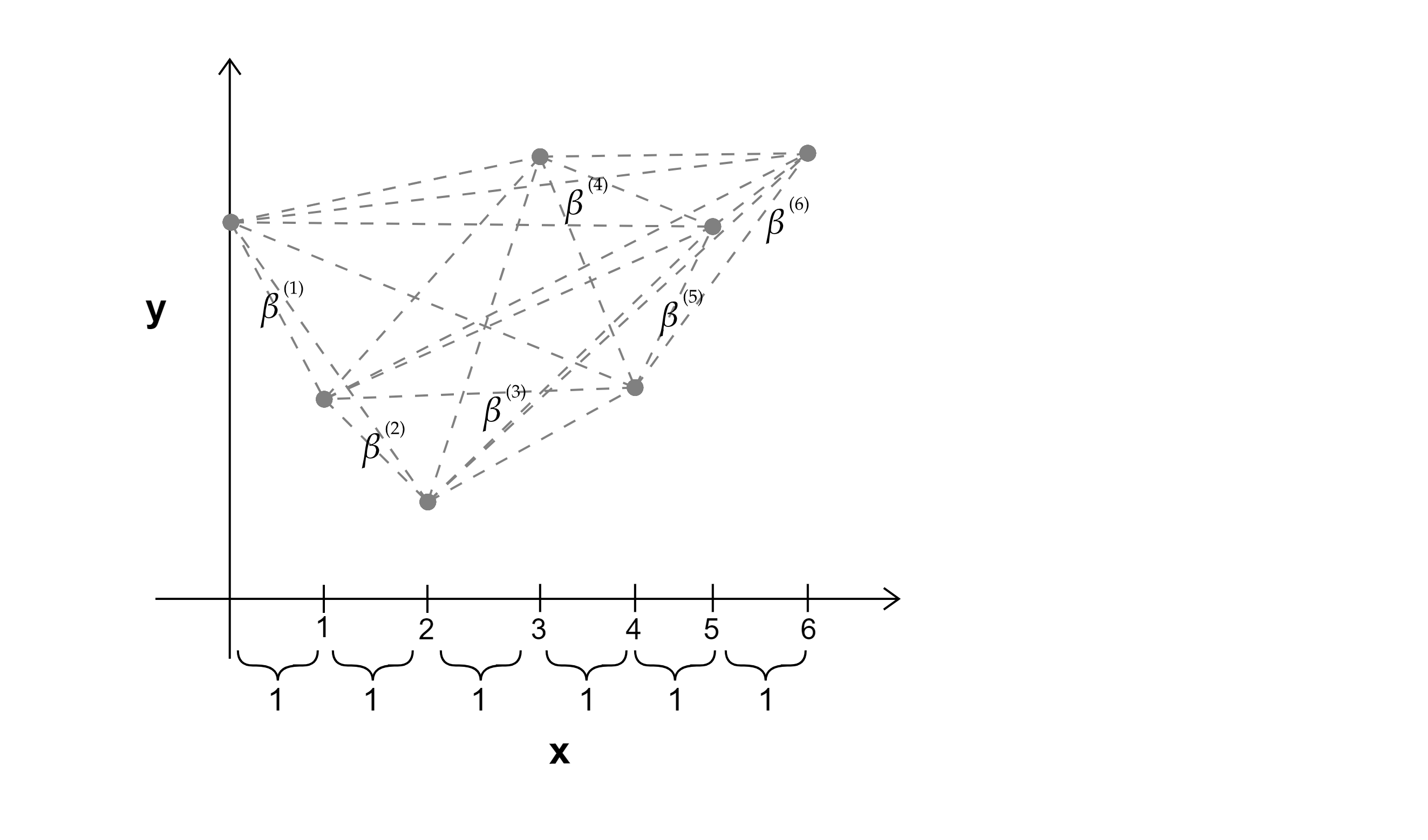}
\caption{Non-sorted full-pairwise}		
\label{Sorted2}
\end{figure}

Our main question of interest is how these ``cluster-wise" or let us call them {\it pairwise} parameters relate to the ``true" parameters of the data generation process formulated in (\ref{eq:dgp}).
Let us denote the pairwise parameters in the $i$th cluster with
 ${\beta_{i,0}}$ and  ${\beta_{i, 1}}$
and consider the following two estimators:
\begin{itemize}
\item
The weighted average of $\beta_{i,1}$ 
\begin{equation}
    \hat{\beta}_1 = \frac{\sum_{i=1}^{n} w_i \beta_{i,1}}{\sum_{i=1}^{n} w_i } \label{eqn2}
\end{equation}
with some $w_i$ weights.
\item Minimise a loss function such as
\begin{equation} 
    \hat{\beta}_1 = \underset{\beta_1}{\arg\min} \sum_{i = 1}^{n}  (w_{i}(\beta_{i,1}-\hat{\beta}_1))^2. \label{eqn3}
\end{equation}
\end{itemize}
\noindent Estimators for $\beta_0$ can also be defined similarly and they will be discussed with more details in Section \ref{subsec:intercept}. \par
The question now is how $\hat{\beta}_1$ relates to the true parameter, $\beta_1$ formulated in Equation (\ref{eq:dgp}). Let us call this estimation procedure {\it Estimation with Pairwise Observations} (EwPO).

\section{Some Monte Carlo Simulation Results}
First, we adopted a Petri-dish type approach and carried out a large number of Monte Carlo simulation. The setup and the main results can be found in Appendix \ref{app:MC-estimation}.\footnote{Additional simulation results can be found in the Online Supplement \cite{EwPO-Sup2023}.} The results provide some interesting insights in terms of the consistency of the proposed estimators under different choices of $w_i$. The EwPO turned out to be consistent in the following cases:

\begin{enumerate}
    \item EwPO as defined in Equation (\ref{eqn2}) with 
    \begin{enumerate}
        \item Sorted full-pairwise and $w_{ij} = x_i - x_j$  
        \item Non-sorted adjacent and $w_i = |x_i-x_{i-1}|$ 
        \item Sorted full-pairwise and $w_{ij} = |x_i-x_j|$
        \item Non-sorted full-pairwise and $w_{ij} = |x_i - x_j|$
        \item Sorted full-pairwise and $w_{ij} = \sqrt{(x_i-x_j)^2 + (y_i-y_j)^2}$
        \item Non-sorted full-pairwise and $w_{ij} = \sqrt{(x_i-x_j)^2 + (y_i-y_j)^2}$
    \end{enumerate}
    \item Loss function defined in Equation (\ref{eqn3}) 
    \begin{enumerate}
        \item Non-sorted adjacent and $w_i = x_i - x_{i-1}$   
        \item Sorted full-pairwise and $w_{ij} = x_i-x_j$
        \item Non-sorted full-pairwise and $w_{ij} = x_i-x_j$
    \end{enumerate}
Note: In case of the loss function taking the absolute value of weights does not matter as weights are squared in the functional form!
    
\end{enumerate}

\section{Some Analytical Results}
Next, let us presents some analytical explanations for the simulations results. We also present the notations and definitions used throughout this paper. 

\subsection{Notations and Definitions} 
In order to develop the theoretical foundation, let us start with introducing a few notations, definitions and concepts. Let $\Delta x_{ij}$ denotes the difference between $x_i$ and $x_j$ that is,  

\begin{equation}\label{eq:delta_f}
    \Delta x_{ij} = x_i - x_j, \qquad i,j=1,\ldots, n.
\end{equation}

\noindent In the event that the subscript is suppressed, then $j=i-1$. That is $\Delta x_i = x_i - x_{i-1}$. \par

\indent The \textit{pairwise} parameter is defined to be 
    \begin{equation}\label{eq:pairwise}
	\beta_{1,(i,j)} = \frac{\Delta y_{ij}}{\Delta x_{ij}}.
    \end{equation}
\indent In the event that $j=i-1$ we simplify the notation by writing $\beta_{1,i}$. That is $\beta_{1,i}=\beta_{1,(i,j)}$. \par

\indent It is useful to also note that 
\begin{equation}\label{eq:firstDiff}
    \Delta y_{ij} = \beta_1 \Delta x_{ij} + \Delta u_{ij} . 
\end{equation} 

\subsection{Partial Sum} 
Some of the expressions for analysing the full-pairwise case contain several partial sums which are asymptotically related to a \textit{Brownian Motion} or \textit{Wiener Process}. So it is useful to introduce some additional notations and state the relevant results here. Define $\floor{x}$ as the largest integer that is less than or equal to $x$ and $\ceil{x}$ as the smallest integer that is larger than or equal to $x$. Let $r\in [0,1]$, $u_i \sim N(0,\sigma^2_u)$ and define 
\begin{equation}\label{eq:wiener}
    W_n(r) = \frac{1}{\sqrt{n}} \sum^{\floor{rn}}_{i=1} \frac{u_i}{\sigma_u} ,
\end{equation} 
\noindent then under the conditional stated in \cite[Theorem 8.2, P.90]{billingsley:1999}, the \textit{Functional Central Limit Theorem} ensures $W_n(r) \overset{d}{\rightarrow} W(r)$, where $W(r)$ denotes a Wiener Process. 

\subsection{Adjacent Case}
 The estimator considered in this section is 
\begin{equation}\label{eq:weightedPW-adjacent}
    \hat{\beta}_1 = \left ( \sum^n_{i=2} w_i \right )^{-1} \left ( \sum^n_{i=2} w_i \beta_{1,i}\right )
\end{equation}
\noindent where $w_i$ represents the weight for the $i^{th}$ pairwise parameter. 
\subsubsection{Case 1: \texorpdfstring{$w_i = \Delta x_i$}{weight equals to Delta x}} 
\indent In this case, Equation (\ref{eq:weightedPW-adjacent}) reduces to 
\begin{align*} 
    \hat{\beta}_1 =& \left ( \sum^n_{i=2} w_i \right )^{-1} \left ( \sum^n_{i=2} w_i \beta_{1,i}\right ) \\
		   =& \left ( \sum^n_{i=2} \Delta x_i \right )^{-1} \left ( \sum^n_{i=2} \Delta y_i \right ) \\
		   =& \left ( \sum^n_{i=2} \Delta x_i \right )^{-1} \left ( \sum^n_{i=2} \beta_1 \Delta x_i + \Delta u_i \right ) \\
		    =&  \beta_1 + \frac{\sum^n_{i=2} \Delta u_i}{\sum^n_{i=2} \Delta x_i} .
\end{align*}
\noindent Now, in the case of adjacent pairwise parameter 
\begin{align*}
    \sum^n_{i=2} \Delta x_i =& (x_2 - x_1) + (x_3 - x_2) + (x_4-x_3) + \ldots + (x_n-x_{n-1}) \\
			    =& x_n-x_1.
\end{align*}
\noindent The last line follows because the first term in every bracket is being cancelled out by the last term in the next bracket. The only two exceptions are $x_1$ and $x_n$. So this gives
\begin{equation}\label{eq:pwEstimator_dx}
    \hat{\beta}_1 = \beta_1 + \frac{u_n - u_1}{x_n - x_1} .
\end{equation} 

\noindent Equation (\ref{eq:pwEstimator_dx}) holds regardless whether the data was sorted or not. The only difference is that, in the sorted case $x_n = \max \{x_i\}^n_{i=1}$ and $x_1 = \min \{ x_i \}^n_{i=1}$. Equation (\ref{eq:pwEstimator_dx}) does seem to suggest that $\hat{\beta}_1$ is not going to be consistent in this case, as $\displaystyle = \frac{\Delta u_{n1}}{\Delta x_{n1}}$  remains even when $n\rightarrow \infty$. \par

\subsubsection{Case 2: \texorpdfstring{$w_i = |\Delta x_i|$}{Weight equals to absolute Delta x}}
\indent In this case, it would be useful to first define the $\sgn{x}$ function.
\begin{equation}\label{eq:sign}
    \sgn{x} = \begin{cases} 
	    1  &\text{ if } x \geq 0 \\ 
	    -1  &\text{ if } x<0. 
    \end{cases}
\end{equation} 
\noindent Note that $\sgn{x}x = |x|$, and that unlike the case when $w_i = \Delta x_i$, 
\begin{equation*}
    \sum^n_{i=1} |\Delta x_i| = |x_2-x_1| + |x_3-x_2| + \ldots + |x_n-x_{n-1}|
\end{equation*}
\noindent and no simple cancellation occurs here when the observations are \textsl{not sorted}. However, when the observations are \textsl{sorted}, it is clear that $\sum^n_{i=2} |\Delta x_i| = \sum^n_{i=2} \Delta x_i$ since $\Delta x_i = |\Delta x_i|$. So the result for the sorted case is the same as the case when $w_i = \Delta x_i$.  

\noindent For the non-sorted case, 

\begin{align*} 
    \hat{\beta}_1 =& \left ( \sum^n_{i=2} w_i \right )^{-1} \left ( \sum^n_{i=2} w_i \beta_{1,i}\right ) \\
		   =& \left ( \sum^n_{i=2} |\Delta x_i| \right )^{-1} \left ( \sum^n_{i=2} \sgn{\Delta x_i} \Delta y_i \right ) \\
		   =& \left ( \sum^n_{i=2} |\Delta x_i| \right )^{-1} \left ( \sum^n_{i=2} \beta_1 \sgn{\Delta x_i} \Delta x_i + \sgn{\Delta x_i}\Delta u_i \right ) \\
		   =& \left ( \sum^n_{i=2} |\Delta x_i| \right )^{-1} \left ( \sum^n_{i=2} \beta_1 |\Delta x_i| + \sgn{\Delta x_i}\Delta u_i \right ) \quad \because \sgn{\Delta x_i}\Delta x_i = |\Delta x_i|\\
		    =&  \beta_1 + \frac{\sum^n_{i=2} \sgn{\Delta x_i} u_i }{\sum^n_{i=1} |\Delta x_i|} .
\end{align*}
\noindent Under the assumption that $\E \left (|\Delta x_i| \right ) \leq \infty$, it is clear that $\E \left ( |\Delta x_i| \right ) > 0$. Note that $\E (|\sgn{\Delta x_i}u_i|) = \E ( |u_i| )$, so under that assumption that $\E (u_i) = 0$, $\E (|u_i|) < \infty$ and $u_i \perp x_i$ then $\E \left [ \sgn{\Delta x_i}u_i \right ] = \E \left [ \sgn{\Delta x_i} \right ]\E [ u_i ] = 0$. Moreover, under Weak Law of Large Number (WLLN), 
\begin{equation*}
 (n-1)^{-1}\sum^n_{i=2} \sgn{\Delta x_i} u_i  = o_p(1)
\end{equation*}
\noindent and hence 
\begin{equation*}
    \hat{\beta}_1 - \beta_1 = o_p(1) .
\end{equation*}

\subsection{Full-pairwise} 
\indent An important difference here is that the number of pairwise parameters is much larger. Let $N$ denotes the number of full-pairwise parameters then $N = \displaystyle \frac{n(n+1)}{2}$. It is also important to identify the exact pair from each pairwise parameter and redefine the estimator as 
\begin{equation}\label{eq:fullpw_estimator}
\hat{\beta}_1 = \left ( \sum^n_{i=2} \sum^{i-1}_{j=1} w_{ij} \right )^{-1} \left ( \sum^n_{i=2} \sum^{i-1}_{j=1} w_{ij} \beta_{1,(i,j)} \right ) 
\end{equation} 
\noindent where $\beta_{1,(i,j)}$ is defined as in Equation (\ref{eq:pairwise}). 
\subsubsection{Case 1: \texorpdfstring{$w_{ij} = \Delta x_{ij}$}{Weight equals to Delta x}}
\indent We first deal with the \textit{non-sorted} case. In this case, the estimator reduces to 
\begin{align} 
    \hat{\beta}_1 =& \left ( \sum^n_{i=2} \sum^{i-1}_{j=1} w_{ij} \right )^{-1} \left ( \sum^n_{i=2} \sum^{i-1}_{j=1} w_{ij} \beta_{1,(i,j)} \right ) \notag \\
		=& \left ( \sum^n_{i=2} \sum^{i-1}_{j=1} \Delta x_{ij} \right )^{-1} \left ( \sum^n_{i=2} \sum^{i-1}_{j=1} \Delta y_{ij} \right ) \notag \\
		=& \beta_1 + \frac{\sum^n_{i=2} \sum^{i-1}_{j=1} \Delta u_{ij}}{\sum^n_{i=2} \sum^{i-1}_{j=1} \Delta x_{ij}}. \label{eq:beta1_fpw}
\end{align}
\noindent Let us first focus on the denominator as the argument for the numerator will be similar. 
\begin{align} 
    \sum^n_{i=2} \sum^{i-1}_{j=1} \Delta x_{ij} =& \sum^n_{i=2} \sum^{i-1}_{j=1} (x_i - x_j) \notag\\
			=& \sum^n_{i=2} \sum^{i-1}_{j=1} x_i - \sum^n_{i=2} \sum^{i-1}_{j=1} x_j \notag\\
		=& \sum^n_{i=2} (i-1) x_i - \sum^n_{i=2} \sum^{i-1}_{j=1} x_j. \label{eq:messy01}
\end{align}
\noindent For reason that will become clear later, we are going to assume $x_i\sim D(\mu_x, \sigma^2_x)$ with $\sigma^2<\infty$ and $\E |x_i|<\infty$. We analyse expression (\ref{eq:messy01}) term-by-term. The first term in expression (\ref{eq:messy01}) can be rewritten as 
\begin{align*} 
    \sum^n_{i=2} (i-1) x_i =& N\mu_x + \sum^n_{i=2} (i-1)(x_i-\mu_x) \\
	    =& N\mu_x + \sum^n_{i=2} iz_i - \sum^n_{i=2} z_i \\
	    =& N\mu_x - \sum^n_{i=2} z_i + \sum^n_{i=2} iz_i \, ,
\end{align*}
\noindent where $z_i = x_i - \mu_x$. Note that $z_i$ is the de-meaned version of $x_i$ and therefore $\E(z_i)=0$. To expand further the expression above, 
\begin{align}
	    N\mu_x - \sum^n_{i=2} z_i + \sum^n_{i=2} iz_i =& N\mu_x - \sum^n_{i=2} z_i + \sum^n_{i=1} \sum^n_{j=i} z_j \notag \\
					    =&N\mu_x - \sum^n_{i=2} z_i + \sum^n_{i=1} \sum^n_{j=\ceil{rN}} z_j \text{ where }r=i/N \notag\\
					    =&N\mu_x - \sum^n_{i=2} z_i + \sqrt{n} \sigma_x \sum^n_{i=1} \left (W_n(1) - W_n(r) \right ) \notag\\
					    =&N\mu_x - \sum^n_{i=2} z_i + n\sqrt{n} \sigma_x \sum^n_{i=1} \left ( \int^{i/n}_{(i-1)/n} W_n(1) - W_n(s)  ds\right ) \notag\\
					    =&N\mu_x - \sum^n_{i=2} z_i + n\sqrt{n} \sigma_x  \left ( W_n(1) - \int^1_0 W_n(s)  ds\right ). \label{eq:messy01_01}
\end{align}	
\noindent Now for the second term in expression (\ref{eq:messy01}), 
\begin{align}
    \sum^n_{i=2} \sum^{i-1}_{j=1} x_j =& N\mu_x +  \sum^n_{i=2} \sum^{i-1}_{j=1} z_j \notag \\
					=& N\mu_x + \sqrt{n} \sigma_x \sum^n_{i=1} W_n(r) \quad r=\frac{i-1}{n} \notag \\ 
				      =& N\mu_x + n\sqrt{n} \sigma_x \sum^n_{i=2} \int^{i/n}_{(i-1)/n} W_n(s) ds \notag \\
				      =& N\mu_x + n\sqrt{n} \sigma_x \int^1_0 W_n(s) ds. \label{eq:messy01_02}.  
\end{align} 
\noindent Substitute expressions (\ref{eq:messy01_01}) and (\ref{eq:messy01_02}) to expression (\ref{eq:messy01}) yields 
\begin{equation} \label{eq:partialSum_DeltaXij}
    \sum^n_{i=2} \sum^{i-1}_{j=1} \Delta x_{ij} = n\sqrt{n} \sigma_x \left ( W(1) - 2 \int^1_0 W(s) ds \right )  -\sum^n_{i=1} z_i. 
\end{equation}	
\noindent Note that $N=O(n^2)$ so 
\begin{equation*}
    N^{-1} \sum^n_{i=2} \sum^{i-1}_{j=1} \Delta x_{ij} = o_p(1). 
\end{equation*}	
\noindent The same argument applies to $\sum^n_{i=1} \sum^{i-1}_{j=1} \Delta u_{ij}$ and since both approach 0 at the same rate, $\hat{\beta}_1$ is going to be unstable as $n\rightarrow \infty$.\par

\indent The case when the data is sorted means that $\sum^n_{i=1} \sum^{i-1}_{j=1} \Delta x_{ij}$ is not going to be 0 asymptotically since $\Delta x_{ij} > 0$ by construction. Equation (\ref{eq:beta1_fpw}) stills hold when $x_i$ is sorted such that $x_i \geq x_j$ for all $i>j$. Since $u_i$ is independent from $x_i$ this means the numerator in Equation (\ref{eq:beta1_fpw}) approaches 0 as $n\rightarrow \infty$ but the denominator either approaches a finite non-zero positive constant or diverges. In both cases, the $\hat{\beta}_1 - \beta_1 = o_p(1)$ under the Assumptions of Proposition \ref{prop:dist_x_est}, which provides the asymptotic distribution for this case. \par

\subsubsection{Case 2: \texorpdfstring{$w_{ij} = |\Delta x_{ij}|$}{Weight equals to Absolute Delta x}} 
Following similar derivation as above, it is straightforward to show that 

\begin{align}
    \hat{\beta}_{1,(i,j)} =& {\left ( \sum^n_{i=2} \sum^{i-1}_{j=1} |\Delta x_{ij} | \right )}^{-1} \left ( \sum^n_{i=2} \sum^{i-1}_{j=1} \frac{\Delta y_{ij} }{\Delta x_{ij} } |\Delta x_{ij} | \right ) \notag \\
    =&  {\left ( \sum^n_{i=2} \sum^{i-1}_{j=1} |\Delta x_{ij}| \right )}^{-1} \left ( \sum^n_{i=2} \sum^{i-1}_{j=1} \Delta y_{ij} \sgn{\Delta x_{ij}} \right ) \notag \\
    =&  {\left ( \sum^n_{i=2} \sum^{i-1}_{j=1} |\Delta x_{ij} | \right )}^{-1} \left ( \sum^n_{i=2} \sum^{i-1}_{j=1} \beta_1 |\Delta x_{ij} | + \Delta u_{ij} \sgn{\Delta x_{ij}} \right ) \notag \\
    =& {\beta_1 +  \left ( \sum^n_{i=2} \sum^{i-1}_{j=1} |\Delta x_{ij} | \right )}^{-1} \left ( \sum^n_{i=2} \sum^{i-1}_{j=1} \Delta u_{ij} \sgn{\Delta x_{ij}} \right ) \notag \\
    =& \beta_1 +  \left ( \sum^n_{i=2} \sum^{i-1}_{j=1} |\Delta x_{ij} | \right )^{-1} \left ( \sum^n_{i=2} \sum^{i-1}_{j=1} u_i \sgn{\Delta x_{ij}} + \sum^n_{i=2} \sum^{i-1}_{j=1} u_j \sgn{\Delta x_{ij}} \right ) \, . \label{eq:tempmess01}
\end{align} 

\noindent Given that $x_i\perp u_j$ for all $i,j=1,\ldots, n$ and  $n^{-1}\sum^n_{i=1} |\sgn{\Delta x_{ij}}| \leq 1$, the last two terms in Equation (\ref{eq:tempmess01}) therefore converges to 0 following a similar argument as in the case $w_i = |\Delta x_{ij}|$. Using the same argument as above, $\sum^n_{i=1}\sum^{i-1}_{j=1} |\Delta x_{ij} | = O_p(n^2)$ with a non-zero positive bound. Thus $\hat{\beta}_1 - \beta_1 = o_p(1)$. \par

 \indent An interesting observation from the theoretical discussion above is that the intercept, $\beta_0$, does not play a role in the consistency of EwPO estimator. The assumption of exogeneity, however, remains essential for the identification and consistency of $\beta_1$ as well as $\beta_0$, as shown in Section \ref{subsec:intercept} below. 

\subsubsection{Confidence Intervals} \label{subsubsect:ci} 

\indent The asymptotic distribution of $\hat{\beta}$ in the full-pairwise case with $w_{ij} = \Delta x_{ij}$ can be found in Proposition \ref{prop:dist_x}. Analytical results for the asymptotic behaviour of the full-pairwise case with $w_{ij} = |\Delta x_{ij}|$ is more difficult as the presence of absolute value creates some technical challenges. In practice, however, hypothesis testing in this case can still be conducted via the \textit{jackknife} procedure. \par

\indent Regardless on the choice of $w_{ij}$, the distribution of $\hat{\beta}$ in the full-pairwise case is likely to be non-standard and involve stochastic integrals. Therefore, the critical values is still required to be simulated even if the analytical distributions can be obtained. From a practical viewpoint, the jackknife procedure does not necessarily induce more computational cost and it is generally much simpler to conduct as the procedure is the same across different choices of $w_{ij}$. 

\indent An example of the procedure to estimate critical values can be found as follows:

\begin{enumerate}[{Step} 1.]
    \item Set the level of significant for the test $\alpha$.
    \item Set the number of observations to be removed from the main sample, call it $d$ where $\sqrt{n} < d < n$. 
    \item Set the number of replication $R < \begin{pmatrix} n \\ n-d \end{pmatrix}$
    \item set $i=1$
    \item Create a jackknife sample, $J_i$, by removing $d$ randomly selected observations from the main sample. \label{step:sample}
    \item Compute $\hat{\beta}$ using the jackknife sample $J_i$, call it $\hat{\beta}_i$
    \item Set $i\leftarrow i+1$. \label{step:increment}
    \item Repeat Steps \ref{step:sample} to \ref{step:increment} until $i=R$. 
    \item Sort the set $\{\hat{\beta}_i\}_{i=1}^R$ from lowest to highest. 
    \item The $\displaystyle \floor{\frac{\alpha}{2} R}^{th}$ and $\floor{\frac{(1-\alpha)}{2}R}^{th}$ elements in the set gives the lower and upper bounds of the confidence interval at the $\alpha$ significant level. 
\end{enumerate}

\noindent Table \ref{table:n_abs_jackknife} provides an example on the performance of the jackknife algorithm. As shown in the table, the algorithm performs well in obtaining the confidence intervals of the full-pairwise estimator. \par

\begin{table}[H]
\centering
\begin{tabular}{|c|c|c|c|c|}
\hline
\multicolumn{1}{|c|}{Full-pairwise MC} & \multicolumn{4}{c|}{Jackknife confidence intervals lower and upper bounds}
\\ \hline
  & Parameter & \makecell{Full-pairwise,\\one single sample\\MC estimates} &\makecell{Lower bound\\($\beta_{0.025}$)}& \makecell{Upper bound\\ ($\beta_{0.975}$)} \\
\cline{2-5} 
\multirow{4}{*}{{n=50}}&  Exogen &  0.4479 & 0.3142  & 0.5618 \\
\cline{3-5} 
& $\rho=0.2$ & 0.6085 & 0.4763  & 0.8134  \\ 
\cline{3-5} 
& $\rho=0.5$ & 0.5925 & 0.4515   & 0.7357\\ 
\cline{3-5} 
& $\rho=0.8$  & 0.5971 & 0.4415   & 0.7929 \\ 
\cline{3-5} 
\hline

\multirow{4}{*}{{n=500}} &  Exogen & 0.4725 & 0.4541  & 0.5352 \\
\cline{3-5} 
& $\rho=0.2$ & 0.5347 & 0.4994  & 0.5809  \\ 
\cline{3-5} 
& $\rho=0.5$ & 0.6051 & 0.5881   & 0.6683\\ 
\cline{3-5} 
& $\rho=0.8$  & 0.6516 & 0.6188   & 0.6971 \\ 
\cline{3-5} 
\hline

\multirow{4}{*}{{n=1000}} &  Exogen & 0.5027 & 0.4388  & 0.5003 \\
\cline{3-5} 
& $\rho=0.2$ & 0.5558 &  0.5106  & 0.5660  \\ 
\cline{3-5} 
& $\rho=0.5$ &0.6101 & 0.5626   & 0.6195\\ 
\cline{3-5} 
& $\rho=0.8$  & 0.6527 &  0.6404   & 0.6912 \\ 
\cline{3-5} 
\hline

\multirow{4}{*}{{n=5000}} &  Exogen  & 0.5117 &0.4899  & 0.5155 \\
\cline{3-5} 
& $\rho=0.2$ & 0.5415 & 0.5291  & 0.5554  \\ 
\cline{3-5} 
& $\rho=0.5$ & 0.6007 & 0.5904   & 0.6155\\ 
\cline{3-5} 
& $\rho=0.8$  & 0.6644 & 0.6561   & 0.6801 \\ 
\cline{3-5} 
\hline
\end{tabular}

\caption{Jakckknife simulation of $\beta$ quantiles for one given sample of the Monte Carlo (MC) simulation results which are reported in the $2^{\text{nd}}$ column,
DGP {$\sim$} N(0,5), $|\Delta x|$ weighted full-pairwise estimator, number of repetitions: 10000 and jackknife sample size ($d$): 2535 }
\label{table:n_abs_jackknife}
\end{table}

\subsection{What about the Intercept?} \label{subsec:intercept} 
The discussion above focuses on the slope. This section investigates the relation between pairwise parameter and the intercept in Equation (\ref{eq:dgp}). The most obvious approach is to consider 
\begin{equation} \label{eq:naive_beta0}
    \hat{\beta}_0 = \bar{y}_n - \hat{\beta}_1 \bar{x}_n \,,
\end{equation} 
\noindent where $\hat{\beta}_1$ is any consistent estimator of $\beta_1$ with $\bar{y}_n$ and $\bar{x}_n$ denoting the averages of $y_i$ and $x_i$ over a sample of $n$ observations, respectively. That is, $\bar{y}_n = n^{-1} \sum^n_{i=1} y_i$ and similarly for $\bar{x}_n$.\par

\indent It is straightforward to show that $\hat{\beta}_0$ is consistent since 
\begin{align*}
    \hat{\beta}_0 =& \bar{y}_n - \hat{\beta}_1 \bar{x}_n \\
		=& \left (\beta_1 - \hat{\beta}_1 \right ) \bar{x}_n + \bar{u}_n
\end{align*}	
\noindent and as $n\rightarrow \infty$, $\beta_1 - \hat{\beta}_1 = o_p(1)$, $\bar{x}_n - \E(x_i) = o_p(1)$ an $\bar{u}_n = o_p(1)$. Thus, by the Continuous Mapping Theorem, $\hat{\beta}_0 - \beta_0 = o_p(1)$. \par

\indent While this provides a consistent estimator, we would also like to examine if there is a connection between the pairwise parameter of $\beta_0$ and $\beta_0$. Define the pairwise parameter of the intercept as 
\begin{equation}\label{eq:intercept_pw}
    \beta_{0, (i,j)} = y_i - \beta_{1, (i,j)} x_i. 
\end{equation}	
\subsubsection{The adjacent case}
\noindent In the adjacent case we consider the following estimator of $\beta_0$
\begin{equation}
    \hat{\beta}_0 = {\left ( \sum^n_{i=1} w_i \right )}^{-1} \left ( \sum^n_{i=1} w_i \beta_{0,i} \right ). 
\end{equation}	
\noindent Since with $w_i= |\Delta x_i|$  the non-sorted observations is the only case that produces consistent estimate of $\beta_1$, we will focus on this particular case for $\beta_0$. 
\begin{align}
    \hat{\beta}_0 =& {\left ( \sum^n_{i=2} |\Delta x_i | \right )}^{-1} \left ( \sum^n_{i=2} |\Delta x_i| \beta_{0,i} \right ) \notag\\
    =& \beta_0 + {\left ( \sum^n_{i=2} |\Delta x_i| \right ) }^{-1} \left [ \beta_1\sum^n_{i=2} |\Delta x_i| x_i  -  \sum^n_{i=2} |\Delta x_i | x_i \beta_{1,i} + \sum^n_{i=2} |\Delta x_i|u_i \right ]. \label{eq:intercept_intermediate_absx}
\end{align}	
\noindent Expanding the following term further gives
\begin{align*}
    \sum^n_{i=2} |\Delta x_i| x_i \beta_{1,i} =& \sum^n_{i=2} \sgn{\Delta x_i}x_i \Delta y_i   \\
		=& \beta_1 \sum^n_{i=2} |\Delta x_i| x_i + \sum^n_{i=2} \sgn{\Delta x_i} x_i\Delta u_i.  
\end{align*}	
\noindent Substitute the last expression above into Equation (\ref{eq:intercept_intermediate_absx}) yields
\begin{equation}
    \hat{\beta}_0 = \beta_0 + {\left ( \sum^n_{i=2} |\Delta x_i| \right )}^{-1} \left ( \sum^n_{i=2} |\Delta x_i| u_i  - \sum^n_{i=2} \sgn{\Delta x_i} x_i\Delta u_i \right ). 
\end{equation}	
\noindent Under the assumption that  $x_i \perp u_i$ and the usual regularity assumptions on $x_i$, then by the WLLN, $\sum^n_{i=2} |\Delta x_i| u_i = o_p(1)$ and $\sum^n_{i=2} \sgn{\Delta x_i} x_i \Delta u_i$. The former holds, because the sum is converging to $\E \left ( |\Delta x_i| u_i \right ) = \E \left (|\Delta x_i| \right ) \E (u_i) = 0$ and the second expression holds because the sum is converging to $\E \left [ \sgn{\Delta x_i} x_i \Delta u_i \right ] = \E \left [\sgn{\Delta x_i} x_i\right ] \E \left (\Delta u_i \right )=0$. \par

\subsection{Loss Function}
Next, let us analyse the properties of estimator (\ref{eqn3}):

\begin{equation} \label{eq:quadraticLoss}
    \hat{\beta} = \underset{\beta}{\arg\min} \sum^n_{i=1} \left [ w_i \left ( \beta_i - \beta\right ) \right ]^2 \,,
\end{equation}
\noindent where 
\begin{equation*} 
    \beta_i = \frac{\Delta y_i}{\Delta x_i}.
\end{equation*}
\subsubsection{\texorpdfstring{$w_i = |\Delta x_i|$}{Weight equals to Absolute Delta x}}
Given the quadratic loss, here show that the case when $w_i = \Delta x_i$ is identical to the case when $w_i = |\Delta x_i|$. \par

The solution to Equation (\ref{eq:quadraticLoss}) implies 
\begin{equation*}
    \sum^n_{i=1} w_i^2 \left ( \beta_i - \hat{\beta} \right ) = 0,  
\end{equation*}
\noindent and therefore
\begin{equation*}
   \hat{\beta} = \left (\sum^n_{i=1} w_i^2 \right )^{-1} \left ( \sum^n_{i=1} w_i^2 \beta_i \right ).
\end{equation*}
Thus when $w_i = |\Delta x_i|$ or when $w_i = \Delta x_i$ we get
\begin{equation*}
   \hat{\beta} = \left (\sum^n_{i=1} |\Delta x_i|^2 \right )^{-1} \left ( \sum^n_{i=1} \Delta x_i \Delta y_i \right ) \,,
\end{equation*}
\noindent which is in fact the OLS estimator for the `first difference' equation
\begin{equation}
    \Delta y_i = \Delta x_i\beta  + \Delta u_i \,.
\end{equation}

\subsubsection{Full-pairwise} 
Reconsider the estimator in the full-pairwise case

\begin{equation}
    \hat{\beta} = \underset{\beta}{\arg\min} \sum^n_{i=1} \sum^{i-1}_{j=1} \left [ w_{ij} \left ( \beta_{ij} - \beta\right ) \right ]^2 \,,
\end{equation}

\noindent which gives 
\begin{equation*}
    \hat{\beta} = \left (\sum^n_{i=1} \sum^{i-1}_{j=1} w_{ij}^2 \right )^{-1} \left ( \sum^n_{i=1}\sum^{i-1}_{j=1} w_{ij}^2 \Delta \beta_{ij} \right ).
\end{equation*}

\noindent Now, if we set $w_{ij}=|\Delta x_{ij}|$ then 
\begin{equation} \label{eq:betahatInBeta}
    \hat{\beta} = \beta +  \left (\sum^n_{i=1} \sum^{i-1}_{j=1} |\Delta x_{ij}|^2 \right )^{-1} \left (  \sum^n_{i=1} \sum^{i-1}_{j=1} \Delta x_{ij}\Delta u_{ij} \right ) \,,
\end{equation}
\noindent which under the assumption that $x_i\perp u_i$, shows that the estimator is consistent. The same applies to $w_{ij} = \Delta x_{ij}$. The equivalence is mostly due to the quadratic loss.  \par 
\indent An interesting case is $w_{ij} = \sqrt{|\Delta x_{ij} |}$, then 
\begin{equation*}
    \hat{\beta} =  \left (\sum^n_{i=1} \sum^{i-1}_{j=1} |\Delta x_{ij}| \right )^{-1}  \left (\sum^n_{i=1} \sum^{i-1}_{j=1} |\Delta x_{ij} | \hat{\beta}_{ij} \right ) \,,
\end{equation*}
\noindent which is in fact our full-pairwise estimator. \par

\section{Multivariate Extension} \label{sec:multivariate}

One way to extend the approach to multiple explanatory variables is to utilise the residual matrix. Consider the following Data Generating Process

\begin{equation} \label{eq:dgp_multi}
    y_i = \beta_0 + \sum^K_{k=1} \beta_k x_{ki} + u_i \qquad u_i\sim D(0, \sigma^2_u) \quad i=1,\ldots, n 
\end{equation}	
\noindent with the matrix representation 

\begin{equation} \label{eq:dgp_multi_matrix}
    \bfy = \bfX\beta + u\,,
\end{equation}	
\noindent where $\bfy = \left (y_1, \ldots, y_n\right )'$ is a $n\times 1$ vector containing the dependent variable, $\bfX = \left (\bfx_1, \ldots, \bfx_K \right )$ is a $n\times K$ matrix containing the $K$ explanatory variables such that $\bfX$ is $\bfx_k = \left ( x_{k1}, \ldots, x_{kn} \right )'$ for $k=1,\ldots, K$ and $\bfu = \left (u_1, \ldots, u_K \right )'$ is a $n\times 1$ vector containing the residuals. \par

It is convenient to introduce some special matrices here. These matrices allow us to represent the pairwise parameter as a sequence of matrix operations. Define $\bfd_{i,j}$ as a $1\times n$ vector such that the $i^{th}$ and $j^{th}$ elements are 1 and -1, respectively while all other elements are 0. Let $\bfD_F = \left ( \bfd_{n,n-1}', \bfd_{n,n-2}', \ldots, \bfd_{n,1}', \bfd_{n-1,n-2}', \ldots, \bfd_{n-1,1}', \ldots, \bfd_{2,1}' \right )'$ and $\bfD_a = \left ( \bfd_{2,1}', \bfd_{3,2}', \bfd_{4,3}'\ldots, \bfd_{n,n-1}'\right )'$. Given these matrices, it is straightforward to show that $\bfD_F \bfx = \{ \Delta x_{ij} \}_{i=2,j=1}^{n,i-1}$ is a $n(n-1)/2 \times 1$ vector and $\bfD_a\bfx = \{ \Delta x_i \}_{i=2}^{n}$ is a $(n-1) \times 1$ vector. The former gives the vector of Full-pairwise Difference while the latter yields the Adjacent Difference. \par

Let $\bfS:\R^n\rightarrow \R^n$ be a function that returns a $n\times 1$ zero-one selection matrix such that if $\bfz = \bfS(\bfx)\bfx$ then $\bfz$ is a $n\times 1$ vector containing the same elements as $\bfx$ such that $z_i \geq z_j$ for $j>i$. Thus, $\bfS(\bfx)\bfx$ produces the sorted version of $\bfx$ in descending order. Finally, define the residual maker 
\begin{equation}
    \bfM_k = \bfI_{n} - \bfX_{-k} {\left ( \bfX_{-k}'\bfX_{-k} \right )}^{-1}\bfX_{-k} \,,
\end{equation}	
\noindent where $\bfX_{-k}$ denote the $n\times K-1$ matrix which contains all the columns of $\bfX$ except the $k^{th}$ column. \par

The first step is to develop the matrix representation of the univariate case, then the multivariate case can be derived by repeat applications of the residual maker. That is, we  construct the multivariate version by combining the univariate estimator for each of the regressors. 

\begin{equation*}
    \bfy = \bfone\beta_0 + \bfx \bfbeta_1 + \bfu 
\end{equation*}	

\noindent and note that $\bfD \bfy = \bfD\bfx \beta_1 + \bfD\bfu$ for $\bfD = \{\bfD_a, \bfD_F\}$. The pairwise parameters for the sample can be written as 
\begin{equation} \label{eq:pairwise_multi}
    \bfbeta_1 = \bfdiag{\bfx}^{-1}  \bfD\bfy \,,
\end{equation}	
\noindent where $\bfdiag{\bfx}= {\left [ \bfI \otimes \left (\bfD \bfx \right )' \right ]\bfS_n }$ is a diagonal matrix with the elements of $\bfx$ in the diagonal. $\bfS_n$ is a $n^2\times n$ zero-one matrix such that all the $( (i-1)n+1, i)$ elements are 1 for $i=1,\ldots, n$ and all the other elements are 0. Note that Equation (\ref{eq:pairwise_multi}) is reasonably general and covers all the previously discussed case by defining the transformation matrix $\bfD$ differently. A summary can be found in Table \ref{tab:cases}.

\begin{table}[h]
    \begin{center}
    \begin{tabular}{|c|c|}
    \hline
	Cases & $\bfD$ \\ \hline
	Adjacent & $\bfD_a$ \\
	Adjacent Sorted & $\bfD_a \bfS (\bfx)$ \\ 
	Full-pairwise & $\bfD_F$ \\
	Full-pairwise Sorted & $\bfD_F \bfS (\bfx)$ \\ \hline
    \end{tabular}
    \caption{Definitions of $\bfD$ based on different construction of pairwise parameters} \label{tab:cases} 
    \end{center}
\end{table}

Given the vector of pairwise parameters $\bfbeta_1$, the estimator of $\beta_1$ can be written as 
\begin{equation} \label{eq:pairwise_estimator}
    \hat{\beta}_1 = {\left (\bfw_1 \bfi' \right )}^{-1}\bfw_1'\bfbeta_1 \,,
\end{equation}
\noindent where $\bfw_1$ is the vector of weights associated with each $\beta_{1,(i,j)}$ in $\bfbeta_1$. Following from the analysis above, some of the choices include $\bfw_1 = \left ( |\Delta x_1|, \ldots, |\Delta x_n| \right )'$ and $\bfw_1 = \left ( \Delta x_1, \ldots, \Delta x_n \right )'$. \par
\indent It is straightforward to generalise Equation (\ref{eq:pairwise_multi}), and subsequently Equation (\ref{eq:pairwise_estimator}), to the case when $\bfX$ is a $n\times K$ matrix with $K$ regressors. Consider  
\begin{align}
    \bfy =& \bfX\bfbeta + \bfu \\
    =& \bfx_k \beta_k + \bfX_{-k} \bfbeta_{-k} + \bfu \\
    \bfM_k \bfy =& \bfM_k \bfx_k \beta_k + \bfu \,, \label{eq:sequential}
\end{align}
\noindent where $\bfbeta_{-k}$ contains the same elements as $\bfbeta$ with the $k^{th}$ element removed. Following Equation (\ref{eq:pairwise_multi}) the pairwise parameter of $\beta_k$ is  
\begin{equation} \label{eq:pairwise_multi_all} 
    \bfbeta_k = \bfdiag{\bfM_k\bfx_k}^{-1} \bfD \bfM_k\bfy.
\end{equation}
\noindent Similarly, Equation (\ref{eq:pairwise_estimator}) can be generalised as
\begin{equation} \label{eq:pairwise_estimator_full}
    \hat{\beta}_k ={ \left ( \bfw_k \bfi' \right ) }^{-1} \bfw_k'\bfbeta_k
\end{equation}
\noindent and 
\begin{equation}
    \hat{\bfbeta} = \begin{bmatrix} { \left ( \bfw_1 \bfi' \right ) }^{-1} \bfw_1' &\bfzero & \ldots & \bfzero & \bfzero \\ 
					\bfzero & \ddots & \vdots & \ldots &\bfzero \\
					\bfzero & \ldots &  { \left ( \bfw_k \bfi' \right ) }^{-1} \bfw_k' & \ldots & \bfzero \\
					\vdots & \ldots &\vdots & \ddots & \ldots \\
					\bfzero & \ldots & \ldots & \ldots & { \left ( \bfw_K \bfi' \right ) }^{-1} \bfw_K' 
	\end{bmatrix} \begin{bmatrix} \bfbeta_1 \\ \vdots \\ \bfbeta_k \\ \vdots \\ \bfbeta_K \end{bmatrix}.
\end{equation}
    \noindent As shown in previous results, suitable choice of $\bfw_k$ leads to consistency of $\hat{\beta}_k$ and therefore $\hat{\bfbeta} - \bfbeta = o_p(1)$ under the same conditions as those discussed above for all $k=1,\ldots, K$. 

\section{Testing for Endogeneity} 
There are multiple endogeneity tests available for empirical use, but all rely on some kind of additional or external information, mostly in the form of instrumental variables
(see e.g., \cite{Hausmann1978}, or \cite{Wooldridge2002}, pp.~118-122).

The aim in this section is to develop two tests for endogeneity based on the estimator of the form 

\begin{equation*}
    \hat{\beta} = \left ( \sum^n_{i=2} \sum^{i-1}_{j=1} w_{ij} \right )^{-1}  \left ( \sum^n_{i=2} \sum^{i-1}_{j=1} w_{ij} \beta_{ij} \right )
\end{equation*}
which solely relies on data already in use for the estimation. \par

\subsection{Residuals Test}
The residuals test is particularly useful when $\beta_0=0$. Consider $w_{ij} = |\Delta x_{ij}|$ which implies

\begin{equation} \label{eq:estimator}
    \hat{\beta} = \beta + \delta_n  \,,
\end{equation}
\noindent where 
\begin{equation} \label{eq:delta}
    \delta_n = \left ( \sum^n_{i=2} \sum^{i-1}_{j=1} |\Delta x_{ij}| \right )^{-1} \left ( \sum^n_{i=2} \sum^{i-1}_{j=1} \sgn{\Delta x_{ij}} \Delta u_{ij} \right )
\end{equation}
\noindent and under $x_i \perp u_i$, $\delta_n = o_p(1)$. Now given the following specification
\begin{equation} \label{eq:lm_nointercept}
    y_i = x_i\beta + u_i 
\end{equation}
\noindent the estimated residual from the EwPO estimator is 
\begin{align*}
    \hat{u}_i =& y_i - x_i\hat{\beta} \\
    &= x_i (\beta - \hat{\beta}) + u_i \\
    &= -x_i \delta_n + u_i.
\end{align*}
It is straightforward to show that
\begin{align*}
    n^{-1} \sum^n_{i=1} \hat{u}_i =& -n^{-1}\sum^n_{i=1} x_i\delta_n + u_i \\
    =& -\delta_n n^{-1}\sum^n_{i=1} x_i + n^{-1} \sum^n_{i=1} u_i \,.
\end{align*}
\noindent Now as $n\rightarrow \infty$ the last line above is 
\begin{equation} \label{eq:bias}
    n^{-1} \sum^n_{i=1} \hat{u}_i = -\delta_n \mu_x + o_p(1)
\end{equation}
\noindent under the assumption that $\E(u_i) = 0$. There are two cases when $n^{-1}\sum^n_{i=1} \hat{u}_i$ is $o_p(1)$ namely, $x_i \perp u_i$ or $\mu_x = 0$.  Assuming $\mu_x\neq 0$, which can be easily verified in practice, it is possible to \textit{directly} test $x_i \perp u_i$ by testing $H_0: \E(u_i) = 0$. This provides the foundation for testing endogeneity by examining the mean of the estimated residuals using standard testing procedure, such as the $t$-test.\footnote{The argument here also applies to the usual Ordinary Least Squares (OLS) estimator. That is, when $\beta_0=0$, the estimated residuals do not have 0 mean. Thus, it also provides a test of endogeneity in this special case.} 

\subsection{Covariance Test}

Of course $\beta_0=0$ is often too restrictive in practice. One way to alleviate this is to remove the intercept by considering the model in difference. Consider $\Delta \hat{u}_{pq} = \Delta y_{pq} - \Delta x_{pq} \hat{\beta}$ i.e., the `estimated' residuals in the form 
\begin{equation}
    \begin{split}
        \Delta \hat{u}_{pq} =& \Delta y_{pq} - \Delta x_{pq} \hat{\beta} \\
         =& \Delta x_{pq} \left ( \beta - \hat{\beta} \right ) + \Delta u_{pq}.
    \end{split}
\end{equation}
\noindent In the case of a consistent estimator $\hat{\beta}$ can be expressed in the form of $\hat{\beta} = \beta + \delta(\bfw, \bfu)$ where $\bfw$ and $\bfu$ denote the vectors of the weights i.e., $w_{ij}$ and the residuals $u_i$, respectively, for $i,j=1,\ldots, n$ and $B$ vanishes to 0 asymptotically under ideal conditions, Therefore
\begin{equation*}
    \Delta \hat{u}_{pq} = \Delta x_{pq} \delta(\bfw, \bfu ) + \Delta u_{pq}. 
\end{equation*}
Hence, one can test $H_0:\E [\delta (\bfw, \bfu)] = 0$ by considering the test statistics 

\begin{equation} \label{eq:testStatistics}
    \begin{split}
        S(\bfw) = n^{-2} \sum^n_{p=2} \sum^{p-1}_{q=1} \Delta x_{pq} \Delta \hat{u}_{pq}.
    \end{split}
\end{equation}
\subsubsection{Case 1: \texorpdfstring{$\bfw = \{\Delta x_{ij} \}_{i,j=1}^n$}{Delta x }}
In this case, 
\begin{equation*}
    \Delta \hat{u}_{pq} = \Delta x_{pq}  \left ( \sum^n_{i=2} \sum^{i-1}_{j=1} \Delta x_{ij} \right )^{-1}  \left ( \sum^n_{i=2} \sum^{i-1}_{j=1} \Delta u_{ij} \right ) + \Delta u_{pq}
\end{equation*}
\noindent and the test statistics 

\begin{equation}
    \begin{split}
    S(\bfw) =& n^{-2} \left (\sum^n_{p=1} \sum^{p-1}_{q=1} \Delta x_{pq}^2\right ) \left ( \sum^n_{i=2} \sum^{i-1}_{j=1} \Delta x_{ij} \right )^{-1}  \left ( \sum^n_{i=2} \sum^{i-1}_{j=1} \Delta u_{ij} \right ) \\
    &\qquad + n^{-2} \left ( \sum^n_{p=2} \sum^{p-1}_{q=1} \Delta x_{pq} \Delta u_{pq} \right ) \,.
    \end{split}
\end{equation}
\noindent Under the null that $x_i \perp u_i$, $S(\Delta x_{ij})$ has mean 0. The asymptotic behaviour of EwPO estimator as defined in Equation (\ref{eq:fullpw_estimator}) with $w_{ij} = \Delta x_{ij}$ and that of the test statistics as defined above can be found in Propositions \ref{prop:dist_x_est} and \ref{prop:dist_x}, respectively. \par 
\indent Before presenting the theoretical results, consider the following assumptions

\begin{enumerate}[{Assumption} 1.]
    \item $\E (x_i) = \mu_x <\infty$ $\forall i=1,\ldots, n$ and $\exists \sigma^2_x < \infty$ such that $n^{-1} \sum^n_{i=1} (x_i-\mu_x)^2 - \sigma_x^2 = o_p(1)$. \label{ass:moment_x}
    \item $\E (u_i) = 0$ $\forall i=1,\ldots, n$ and $\exists \sigma^2_u < \infty$ such that $n^{-1} \sum^n_{i=1} u^2_i - \sigma^2_u = o_p(1)$.  \label{ass:moment_u}
    \item $\E (u_i| x_j) =0$ for $i,j=1,\ldots, n$. \label{ass:exogeneity}
    \item $\Delta x_{ij}$ and $\Delta u_{ij}$ are mixingales as defined in \cite{mcleish_invariance_1975} and satisfy all the conditions for Theorem 3.8 in \cite{mcleish_invariance_1975}. \label{ass:mixingale}
\end{enumerate}

\noindent Assumptions \ref{ass:moment_x} and \ref{ass:moment_u} are standard in the sense that the existence of second moments are required for both the regressor and the residuals. The exogeneity assumption as presented in Assumption \ref{ass:exogeneity} is required for the consistency of the estimator. It also stipulates the null underlying the distribution of the test statistics as asserted in Proposition \ref{prop:dist_x}. Assumption \ref{ass:mixingale} is required to ensure that the various partial sums converge to the  Brownian Motion process using the \textit{Functional Central Limit Theorem}. It is required due to the partial sums from the full-pair construction of the estimator. See also \cite{wooldridge_invariance_1988} for potentially more general assumptions on $x_i$ and $u_i$. 

\indent The asymptotic distribution of the full-pairwise EwPO with $w_{ij} = \Delta x_{ij}$ is presented in Proposition \ref{prop:dist_x_est} below

\begin{proposition} \label{prop:dist_x_est}
    Under Assumptions \ref{ass:moment_x} to \ref{ass:mixingale} with $\hat{\beta}$ as defined in Equation ({\ref{eq:fullpw_estimator}}) and $\bfw = \{\Delta x_{ij} \}^n_{i,j=1}$: 
    \begin{equation}
        \hat{\beta} - \beta \overset{d}{\rightarrow} \frac{ W(1) - 2\int^1_0 W(\lambda) d\lambda}{B(1) - 2\int^1_0 B(\lambda) d\lambda} .
    \end{equation}
\end{proposition}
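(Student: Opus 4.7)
The plan is to reduce the problem to two applications of the partial sum analysis already carried out in Section 3 for $\sum\sum \Delta x_{ij}$, and then to apply the Functional Central Limit Theorem jointly to the numerator and denominator of the stochastic ratio. Rewriting Equation~(\ref{eq:beta1_fpw}) gives
$$\hat{\beta}-\beta = \frac{\sum_{i=2}^n\sum_{j=1}^{i-1} \Delta u_{ij}}{\sum_{i=2}^n\sum_{j=1}^{i-1} \Delta x_{ij}},$$
so it suffices to determine the joint asymptotic behaviour of these two double sums after a common normalisation by $n^{3/2}$.

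First I would re-use the identity (\ref{eq:partialSum_DeltaXij}) for the denominator to obtain
$$n^{-3/2}\sum_{i=2}^n\sum_{j=1}^{i-1} \Delta x_{ij} = \sigma_x\!\left(W_n^x(1) - 2\int_0^1 W_n^x(s)\,ds\right) + o_p(1),$$
where the remainder $n^{-3/2}\sum_{i=1}^n z_i$ is $O_p(n^{-1})$ by Assumption~\ref{ass:moment_x} and a standard CLT. The same manipulation applies verbatim to the numerator, because $\E(u_i)=0$ eliminates the $N\mu$ contributions cleanly, yielding
$$n^{-3/2}\sum_{i=2}^n\sum_{j=1}^{i-1} \Delta u_{ij} = \sigma_u\!\left(W_n^u(1) - 2\int_0^1 W_n^u(s)\,ds\right) + o_p(1),$$
where $W_n^u$ and $W_n^x$ denote the partial sum processes defined as in Equation~(\ref{eq:wiener}) applied to $u_i$ and to $z_i = x_i-\mu_x$ respectively.

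Next I would invoke the Functional Central Limit Theorem under Assumption~\ref{ass:mixingale} to establish that $(W_n^u, W_n^x)$ converges weakly in $C[0,1]\times C[0,1]$ to a pair of independent standard Brownian motions $(W, B)$. Marginal convergence is exactly the mixingale FCLT already cited; joint convergence with independent limits is obtained via the Cram\'er--Wold device on arbitrary linear combinations $aW_n^u + bW_n^x$, together with Assumption~\ref{ass:exogeneity} which forces every cross-covariance $\E(u_i z_j)$ to vanish. The functional $\Phi(f) := f(1) - 2\int_0^1 f(s)\,ds$ is continuous on $C[0,1]$ under the uniform topology, so the Continuous Mapping Theorem delivers
$$\left( n^{-3/2}\sum_{i=2}^n\sum_{j=1}^{i-1} \Delta u_{ij},\; n^{-3/2}\sum_{i=2}^n\sum_{j=1}^{i-1} \Delta x_{ij} \right) \overset{d}{\rightarrow} \bigl(\sigma_u\Phi(W),\; \sigma_x\Phi(B)\bigr),$$
and the proposition follows once the constants $\sigma_u$ and $\sigma_x$ are absorbed into the limiting Brownian motions.

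The main obstacle is the final ratio step. Applying the Continuous Mapping Theorem to $(a,b)\mapsto a/b$ requires the denominator to be almost surely nonzero on the support of the limit. This is routine because $\Phi(B)$ is a centred Gaussian random variable whose variance is a strictly positive quadratic functional of the covariance kernel of $B$, hence $\Phi(B)$ admits a density on $\R$ and is nonzero almost surely. A secondary technical point is the need for joint -- rather than merely marginal -- weak convergence of $(W_n^u, W_n^x)$; tightness of each coordinate is supplied by the FCLT of \cite{mcleish_invariance_1975}, joint tightness is automatic in a product space, and the Gaussian limit decouples into independent factors precisely because of Assumption~\ref{ass:exogeneity}.
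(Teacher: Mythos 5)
Your proof is correct and follows essentially the same route as the paper: it reduces $\hat{\beta}-\beta$ to the ratio in Equation (\ref{eq:beta1_fpw}), normalises both double sums by $n^{3/2}$ using the representation (\ref{eq:partialSum_DeltaXij}), and applies the mixingale FCLT together with the Continuous Mapping Theorem, exactly as in Equations (\ref{eq:prop1_c1}) and (\ref{eq:prop1_c2}) of the Mathematical Appendix. Your additional care over joint (rather than merely marginal) weak convergence of the two partial-sum processes and over the almost-sure nonvanishing of the limiting denominator makes explicit two steps the paper leaves implicit.
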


\begin{proof}
    See Appendix \ref{app:proof}. 
\end{proof}

\noindent The proposition below gives the asymptotic distribution of the covariance test statistics generated by the estimator considered in Proposition \ref{prop:dist_x_est} above. 

\begin{proposition} \label{prop:dist_x}
Under the assumptions of Proposition \ref{prop:dist_x_est}

\begin{equation} \label{eq:asymptotic_dist}
\begin{split}
    (\sigma_x\sigma_u)^{-1} S(\bfw) \overset{d}{\rightarrow}& B(1)W(1) + \int^1_0 B(\lambda) W(\lambda) d\lambda - \int^1_0B(\lambda)dW(\lambda) \\&\quad - \int^1_0W(\lambda) dB(\lambda) - B^2(1) \frac{ W(1) - 2\int^1_0 W(\lambda) d\lambda}{B(1) - 2\int^1_0 B(\lambda) d\lambda}, 
\end{split}
\end{equation}
\noindent where $B(\lambda)$ and $W(\lambda)$ are independent standard Brownian motion processes. 
\end{proposition}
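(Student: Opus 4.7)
The plan is to substitute the residual identity $\Delta\hat u_{pq} = \Delta u_{pq} - \Delta x_{pq}(\hat\beta-\beta)$ into the test statistic and split $S(\bfw)$ into two pieces:
\[
S(\bfw) \;=\; n^{-2}\sum_{p>q}\Delta x_{pq}\Delta u_{pq} \;-\; (\hat\beta-\beta)\,n^{-2}\sum_{p>q}\Delta x_{pq}^{2}.
\]
Proposition~\ref{prop:dist_x_est} already supplies the weak limit of $\hat\beta-\beta$, which will provide the ratio factor appearing on the right-hand side of Equation~(\ref{eq:asymptotic_dist}). I will then analyse the two summands separately and recombine them through the continuous mapping theorem.

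For the ``variance'' factor $n^{-2}\sum_{p>q}\Delta x_{pq}^{2}$ I would invoke the identity $\sum_{p>q}(x_p-x_q)^{2}=n\sum_i(x_i-\bar x)^{2}$, which reduces the quantity to $n^{-1}\sum_i(x_i-\bar x)^{2}$; Assumption~\ref{ass:moment_x} together with the WLLN (in the FCLT-refined form of Equation~(\ref{eq:partialSum_DeltaXij})) then produces the factor that appears as $B^{2}(1)$ in front of the ratio in Equation~(\ref{eq:asymptotic_dist}). For the ``sample-covariance'' piece I would use
\[
\sum_{p>q}(x_p-x_q)(u_p-u_q) \;=\; n\sum_{i}x_iu_i - \Bigl(\sum_{i}x_i\Bigr)\Bigl(\sum_{i}u_i\Bigr),
\]
and then apply Abel's summation-by-parts to the centred partial sums $Z_k=\sum_{i\le k}(x_i-\mu_x)$ and $U_k=\sum_{i\le k}u_i$ to re-express the sample covariance as combinations of $Z_nU_n$, $\sum_i Z_{i-1}u_i$, $\sum_i U_{i-1}(x_i-\mu_x)$ and $\sum_i Z_iU_i$. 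Under Assumption~\ref{ass:mixingale}, McLeish's FCLT (Theorem~3.8 of \cite{mcleish_invariance_1975}) delivers joint weak convergence of $(n^{-1/2}Z_{\lfloor n\cdot\rfloor},\,n^{-1/2}U_{\lfloor n\cdot\rfloor})$ to $(\sigma_xB,\sigma_uW)$, with independence of $B$ and $W$ inherited from Assumption~\ref{ass:exogeneity}. After the $n^{-2}$ normalisation, the four resulting contributions match, respectively, the $B(1)W(1)$, $\int_0^{1}B(\lambda)W(\lambda)\,d\lambda$, $\int_0^{1}B(\lambda)\,dW(\lambda)$ and $\int_0^{1}W(\lambda)\,dB(\lambda)$ terms of the statement.

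The main technical obstacle is the convergence of the discrete stochastic integrals $n^{-1}\sum_iZ_{i-1}u_i$ and $n^{-1}\sum_iU_{i-1}(x_i-\mu_x)$ to their It\^o counterparts, which is not implied by the FCLT together with the continuous mapping theorem alone and requires a Kurtz--Protter-type weak-convergence result for stochastic integrals; this is applicable here because the limit integrators are independent Brownian motions, and hence ``good'' semimartingales in the required sense. Once those convergences are in place, Slutsky's theorem and the continuous mapping theorem, combined with the ratio limit supplied by Proposition~\ref{prop:dist_x_est}, assemble all five terms on the right-hand side of Equation~(\ref{eq:asymptotic_dist}); dividing by $\sigma_x\sigma_u$ then absorbs the remaining variance constants to yield the expression exactly as stated.
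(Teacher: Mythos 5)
Your opening decomposition is exactly the paper's: substitute $\Delta\hat u_{pq}=\Delta u_{pq}-\Delta x_{pq}(\hat\beta-\beta)$ into (\ref{eq:testStatistics}) so that $S(\bfw)=n^{-2}\sum_{p>q}\Delta x_{pq}\Delta u_{pq}-(\hat\beta-\beta)\,n^{-2}\sum_{p>q}\Delta x_{pq}^{2}$, feed in the ratio limit from Proposition \ref{prop:dist_x_est}, and finish with the continuous mapping theorem. Where you diverge is in how the two remaining sums are handled, and that is where the argument breaks. For the variance factor, your identity $\sum_{p>q}(x_p-x_q)^{2}=n\sum_i(x_i-\bar x)^{2}$ is correct, but combined with Assumption \ref{ass:moment_x} it yields the \emph{deterministic} limit $n^{-2}\sum_{p>q}\Delta x_{pq}^{2}\rightarrow\sigma_x^{2}$ in probability. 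A law of large numbers cannot ``produce the factor $B^{2}(1)$'': $B^{2}(1)$ is a non-degenerate ($\chi^{2}_{1}$-distributed) random variable, so the claimed matching of this term to the coefficient of the ratio in (\ref{eq:asymptotic_dist}) does not follow from anything you have written. (The paper's own proof instead keeps the double-sum partial-sum structure and arrives at $\sigma_x^{2}\left(B^{2}(1)-2\int^{1}_{0}B^{2}(\lambda)\,d\lambda\right)$ in (\ref{eq:prop1_c3}), so you are not reproducing its intermediate step either.)

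The covariance piece has the same problem in starker form. Your identity $\sum_{p>q}\Delta x_{pq}\Delta u_{pq}=n\sum_i x_iu_i-\left(\sum_i x_i\right)\left(\sum_i u_i\right)$ is again correct, but under the null $\sum_i x_iu_i=O_p(n^{1/2})$ and $\sum_i u_i=O_p(n^{1/2})$, so both $n\sum_i x_iu_i$ and $\left(\sum_i x_i\right)\left(\sum_i u_i\right)$ are $O_p(n^{3/2})$ and hence $n^{-2}\sum_{p>q}\Delta x_{pq}\Delta u_{pq}=O_p(n^{-1/2})=o_p(1)$. After the $n^{-2}$ normalisation prescribed by (\ref{eq:testStatistics}) this term cannot converge to the non-degenerate combination $B(1)W(1)+\int^{1}_{0}B(\lambda)W(\lambda)\,d\lambda-\int^{1}_{0}B(\lambda)\,dW(\lambda)-\int^{1}_{0}W(\lambda)\,dB(\lambda)$; the assertion that your four Abel-summation contributions ``match'' the four terms of the statement is precisely the step that is missing, and an order-of-magnitude check shows it cannot be repaired within your normalisation. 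One genuinely good point: you are right that convergence of the discrete stochastic integrals $n^{-1}\sum_i Z_{i-1}u_i$ and $n^{-1}\sum_i U_{i-1}(x_i-\mu_x)$ requires a Kurtz--Protter-type result and does not follow from the FCLT plus continuous mapping alone --- the paper's proof passes over this silently. But as it stands your proposal establishes neither the paper's intermediate limits (\ref{eq:prop1_c3})--(\ref{eq:prop1_c4}) nor the right-hand side of (\ref{eq:asymptotic_dist}).
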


\begin{proof}
See Appendix \ref{app:proof} 
\end{proof}

\begin{remark}
It can be shown that both terms on the right hand side of Equation (\ref{eq:asymptotic_dist}) are negative on expectation. Under the alternative that $u_i$ and $x_i$ are correlated, the values of the last two stochastic integrals in Equation (\ref{eq:asymptotic_dist}) increase as the correlation increases. 
\end{remark} 

\begin{remark}
    The asymptotic distribution as stated in Equation (\ref{eq:asymptotic_dist}) is obviously non-standard and must be simulated in order to obtain the critical value(s) for the purpose of inference. Since $B(\lambda)$ and $W(\lambda)$ are independent Brownian processes, each of the stochastic integrals can be simulated as 
    \begin{align}
        \int^1_0 B(\lambda) d\lambda \approx& \sum^n_{p=1} z_{1p} \label{eq:stochfirst}\\
        \int^1_0 W(\lambda) d\lambda \approx& \sum^n_{p=1} z_{3p} \\
    \end{align}
\noindent where $z_{ip} = z_{ip-1} + \varepsilon_{ip}$, $\bfepsilon_p = (\varepsilon_{1p}, \varepsilon_{2p})' \sim NID(0, \bfI)$ and 
\begin{align}
    \int^1_0 W(\lambda) dB(\lambda) \approx & \sum^n_{p=1} z_{2p}\varepsilon_{1p} \\
    \int^1_0 B(\lambda) dW(\lambda) \approx & \sum^n_{p=1} z_{1p}\varepsilon_{2p} \label{eq:stochlast}. 
\end{align}
\noindent Under the null, other quantities in the distribution namely, $\sigma_x$ and $\sigma_u$ can be estimated consistently from the data and estimated residuals, respectively. Thus the asymptotic distribution can be simulated by repeated computation of Equations (\ref{eq:stochfirst}) - (\ref{eq:stochlast}) and the substitution their values into Equation (\ref{eq:asymptotic_dist}) for desired number of replications. For further details on the simulation of stochastic integrals, see for example, \cite{johansen:1995}. \par

By way of demonstration, the critical values from the simulated sample of Equation (\ref{eq:asymptotic_dist}) using Equations (\ref{eq:stochfirst}) - (\ref{eq:stochlast}) with $\sigma_x=\sigma_u = 1$ and $n=10000$ can be found in Table \ref{tab:cv} and the respective histogram in Figure \ref{fig:cv}. Note that this is simulated with $w_{ij} = \Delta x_{ij}$ and may not be appropriate when $w_{ij} = |\Delta x_{ij}|$. In this case, for example, the jackknife procedure may provide more accurate approximation.  
\end{remark}
{\color{blue}
\begin{figure}[ht]
\begin{center}
    \includegraphics[scale=0.5]{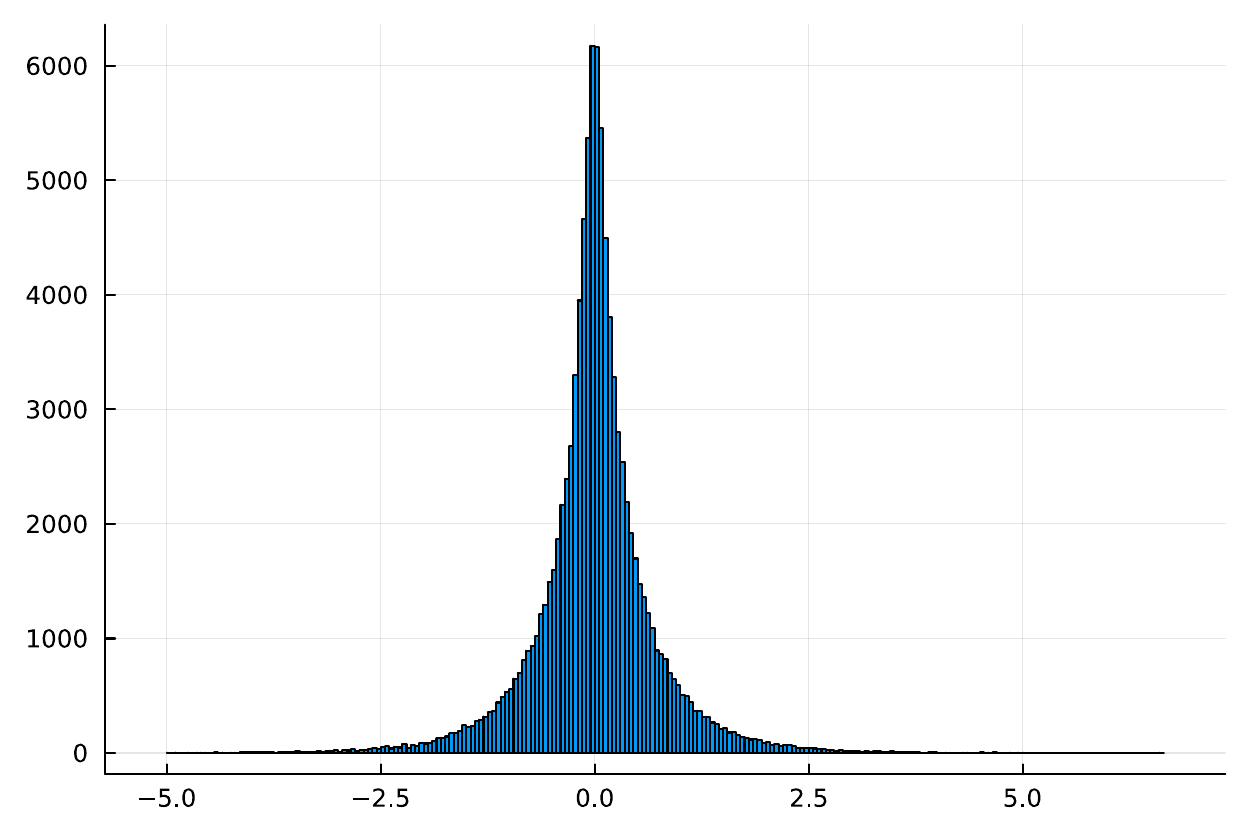}
    \caption{Simulated Sample of the Distribution} \label{fig:cv}
\end{center}
\end{figure}

\begin{table}[]
    \begin{center}
    \begin{tabular}{|c|cc|}
    \hline
         $\alpha$ & Lower CV & Upper CV   \\ \hline
         0.01 \% & -4.129 & 3.913 \\
         0.05 \% & -3.021 & 2.912 \\
         0.10 \% & -2.530 & 2.498  \\ \hline
    \end{tabular}
    \caption{Simulated Critical Values: $w_{ij} = \Delta x_{ij}$} \label{tab:cv}
    \end{center}
\end{table}
}
\subsubsection{Case 2: \texorpdfstring{$w_{ij} = |\Delta x_{ij}|$}{Absolute Delta x}}
\noindent Similar derivation as above yields
\begin{equation}
    \begin{split}
    S(\bfw) =& n^{-2} \left (\sum^n_{p=1} \sum^{p-1}_{q=1} \Delta x_{pq}^2\right ) \left ( \sum^n_{i=2} \sum^{i-1}_{j=1} |\Delta x_{ij}| \right )^{-1}  \left ( \sum^n_{i=2} \sum^{i-1}_{j=1} \sgn{\Delta x_{ij}} \Delta u_{ij} \right ) \\
    &\qquad + n^{-2} \left ( \sum^n_{p=2} \sum^{p-1}_{q=1} \Delta x_{pq} \Delta u_{pq} \right ) \,.
    \end{split}
\end{equation}


The asymptotic distribution of EwPO with $w_{ij}=|\Delta x_{ij}|$ and the associated test statistics are presented in Propositions \ref{prop:abs_diffx_dist} and \ref{prop:abs_diffx_exotest}.

\begin{proposition} \label{prop:abs_diffx_dist}
     Under Assumptions \ref{ass:moment_x} to \ref{ass:mixingale} with $\hat{\beta}$ as defined in Equation ({\ref{eq:fullpw_estimator}}) and $\bfw = \{|\Delta x_{ij}| \}^n_{i,j=1}$: 
    \begin{equation}
        \sqrt{n} \left (\hat{\beta} - \beta \right ) \overset{d}{\rightarrow} 2\frac{\sigma_u}{\mu_{\Delta x_s}} \left ( W(1) - 2\int^1_0 W(\lambda) d\lambda \right ) .
    \end{equation}
    \noindent where $\mu_{\Delta x_s} = \E (\Delta x_s)$ where $x_s$ denotes the sorted version of the regressor $x$ i.e., $\Delta x_{s,ij} = x_i - x_j$ where $x_i > x_j$.  
\end{proposition}

\begin{proof}
    See Appendix \ref{app:proof}.
\end{proof}

\begin{proposition} \label{prop:abs_diffx_exotest}
     Under Assumptions \ref{ass:moment_x} to \ref{ass:mixingale} with $\hat{\beta}$ as defined in Equation ({\ref{eq:fullpw_estimator}}) and $\bfw = \{|\Delta x_{ij}| \}^n_{i,j=1}$: 
     \begin{equation}
     \begin{split}
         (\sigma_x\sigma_u)^{-1} S(\bfw) \overset{d}{\rightarrow} B(1)W(1) +& \int^1_0 B(\lambda)W(\lambda) d\lambda\\
         & - \int^1_0 B(\lambda)dW(\lambda) - \int^1_0 B(\lambda) dW(\lambda).  
     \end{split}
     \end{equation}
\end{proposition}

\begin{proof}
    See Appendix \ref{app:proof}. 
\end{proof}

\indent Appendix \ref{app:MC-test} contains some Monte Carlo simulations of the test statistics with the two different weighting schemes under the null and selected alternatives. As shown in Appendix \ref{app:MC-test}, the mean test statistics deviates from zeros as the correlation between the regressor and the error term grows. Further investigations suggest that the covariance test has the right size with relatively good power, even in small sample, so could really be useful in practice.  

\subsection{Implications}

Basically, the results can be used in two ways. First, we could calculate the appropriate critical values for a given data set and perform the testing procedure formally. However, this requires the simulation of the critical values which is not always convenient in practice.  

Another way, however, to look at this is based on the fact that the value of the test statistic is a monotonically increasing function (in absolute value) of the correlation between $x_i$ and $u_i$, i.e., the degree of endogeneity. Therefore, the following selection procedure can be performed to reduce or minimise the risk of estimation bias due to endogeneity:

\begin{enumerate}[a)]
    \item Estimate the model, e.g., $y_i = \beta x_i +u_i$ with EwPO;
    \item Estimate the model with EwPO after being transformed by the potential instrumental variable (IV) candidate, $z_i = \beta w_i + v_i$ where $z_i = g_iy_i$, $w_i = g_ix_i$ and $v_i = g_iu_i$ with $g_i$ an IV candidate that satisfies the same assumptions as $x_i$; 
    \item Select the model for inference with the lowest statistics value. This  minimises the risk of endogeneity bias. 
\end{enumerate}
\noindent Note that the procedure above does not require the simulation of any distribution but a repeat application of the EwPO estimator. Thus, it is potentially a simpler but effective procedure.  

\subsection{Hausman Test}

A third test is inspired by the well known Hausman Test proposed in \cite{Hausmann1978}. The general idea is that the difference between OLS and EwPO with a suitably chosen weight should be zero asymptotically under the assumption of exogeneity. While EwPO may not be consistent in the presence of endogeneity, its asymptotic bias is generally not the same as that of the OLS depending on the choice of weight. This means their difference is non-zero under the alternative of endogeneity.  This can be easily demonstrated by comparing the OLS estimator with an EwPO estimator. Recall the OLS estimator for a simple linear regression model $y_i = \beta_0 + \beta_1 x_i + u_i$ can be expressed as 

\begin{align*}
    \hat{\beta}_1 =& \beta_1 + \frac{\sum^n_{i=1} (x_i-\bar{x})u_i}{\sum^n_{i=1} (x_i-\bar{x})^2} \\
    =& \beta_1 + \epsilon_n. 
\end{align*}

\noindent Let $\delta_n(w_{ij}) = \left ( \sum^n_{i=2}\sum^{i-1}_{j=1} w_{ij} \right )^{-1} \left ( \sum^n_{i=2}\sum^{i-1}_{j=1} w_{ij} u_{ij} \right )$ then 

\begin{equation*}
    \hat{\beta}_1 - \hat{\beta}_{1,(i,j)} (w_{ij}) = \epsilon_n - \delta_n(w_{ij}). 
\end{equation*}

\noindent where $\hat{\beta}_{1(i,j)} (w_{ij})$ denotes an EwPO estimator with the weight $w_{ij}$. It is clear that under the null of exogeneity, $\epsilon_n=o_p(1)$ and $\delta_n(w_{ij}) = o_p(1)$ with the appropriate choice of $w_{ij}$. Under the alternative that $u_i$ and $x_i$ are correlated however, $|\epsilon_n - \delta_n(w_{ij})| = C + o_p(1)$ where $C$ is a positive (non-zero) real number. Note that if EwPO is consistent under the null and the alternative then $\delta_n(w_{ij}) = o_p(1)$ but $\epsilon_n$ does not vanish asymptotically. This was the basis of the Hausman test, but $\delta_n(w_{ij}) = o_p(1)$ is not required for the test to be valid since the test statistics still has non-zero expectation under the alternative if the asymptotic bias are different between the two estimators. This means as long as both estimators are consistent under the null and their asymptotic bias are different under the alternative, it is still possible to construct a Hausman-type test even though both estimators are inconsistent under the alternative. \par

\indent To demonstrate the general idea, consider the EwPO estimator with $w_{ij} = |\Delta x_{ij}|$, $\hat{\beta}_{1,(i,j)}$, Equation (\ref{eq:tempmess01}) implies that 

\begin{equation*}
    \hat{\beta}_1 - \hat{\beta}_{1,(i,j)} = \frac{\sum^n_{i=1} (x_i-\bar{x})u_i}{\sum^n_{i=1} (x_i-\bar{x})^2} - \frac{\sum^n_{i=2} \sum^{i-1}_{j=1} \sgn{\Delta x_{ij}}\Delta u_{ij}}{\sum^n_{i=2}\sum^{i-1}_{j=1} |\Delta x_{ij}|}. 
\end{equation*}

\noindent In the case that $u_i = \rho (x_i - \mu_x) + v_i$ with $x_i \perp v_i$, it is straightforward to show that $\epsilon_n - \delta_n(|\Delta x_{ij}|) = o_p(1)$ since both bias terms are $\rho + o_p(1)$. Thus, the test will not work with $\hat{\beta}_{1,(i,j)}$ in this case, which rule out the case when $x_i$ and $u_i$ follows a joint normal distribution.  \par

\indent However, when $u_i$ cannot be expressed as a linear function of $x_i$ then Hausman type test may be possible. One example is to consider $u_i = \rho x_i^3 + v_i$ where $x_i$ follows a symmetrical distribution with $\E(x_i) = 0$, $\E(x_i^4) = \kappa < \infty$ and $x_i \perp v_i$, then it is possible to show that $\epsilon_n = \kappa\rho + o_p(1)$ and $\delta_n (|\Delta x_{ij}|) = o_p(1)$ \footnote{EwPO with $w_{ij} = |\Delta x_{ij}|$ is consistent in this special case, despite correlation between $u_i$ and $x_i$}.  

\indent One possible way to construct the test statistics is to consider

\begin{equation}
    H(w_{ij}) = \left [ \hat{\bfbeta}_1 - \hat{\bfbeta}_{1,(i,j)}(w_{ij}) \right ]'\left \{ \Delta \bfSigma \left [\hat{\bfbeta}_1, \hat{\bfbeta}_{1,(i,j)} (w_{ij}) \right ] \right \}^{-1} \left [\hat{\bfbeta}_1 - \hat{\bfbeta}_{1,(i,j)} (w_{ij}) \right ] \,,
\end{equation}

\noindent where $\Delta \bfSigma \left [ \hat{\bfbeta}_1 - \hat{\bfbeta}_{1,(i,j)} \right ] = \bfSigma(\hat{\bfbeta}_1) - \bfSigma\left(\hat{\bfbeta}_{1,(i,j)}\right )$ with $\bfSigma(\hat{\bfbeta}_1)$ and $\bfSigma\left(\hat{\bfbeta}_{1,(i,j)} \right )$ denoting the variance-covariance matrices of $\hat{\bfbeta}_1$ and $\hat{\bfbeta}_{1, (i,j)}$, respectively. 

\indent The asymptotic distribution of the test statistics in the case $K=1$ with $w_{ij} = |\Delta x_{ij}|$ can be readily obtained using Proposition \ref{prop:abs_diffx_dist} as demonstrated in Proposition \ref{prop:hausman}. 

\begin{proposition} \label{prop:hausman}
 Under Assumptions \ref{ass:moment_x}, \ref{ass:moment_u} and \ref{ass:mixingale} along with the assumption that $|\epsilon_n - \delta_n(|\Delta x_{ij}|)| = C + o_p(1)$ where $C>0$, then  

\begin{equation} \label{eq:dist_hausman}
    H(|\Delta x_{ij}|) \overset{d}{\sim} \frac{\sigma_u^2}{d\mu^2_{\Delta x_s}} \left ( \frac{\mu_{\Delta x_s} - 2\sigma_x}{\sigma_x} W(1) - 4 \int^1_0 W(\lambda) d\lambda \right )^2
\end{equation}

\noindent where $d = Var(\hat{\beta}_1) - Var(\hat{\beta}_{1,(i,j)})$. 
\end{proposition}

\begin{proof}
    See Appendix \ref{app:proof}. 
\end{proof}

\indent The result from Proposition \ref{prop:hausman} is that the distribution of the test is non-standard and therefore its critical values must be simulated. Specifically, there are two components that would require simulation. First, the expression within the bracket of Equation (\ref{eq:dist_hausman}) and second, $Var \left ( \hat{\beta}_{1,(i,j)} \right )$, following the result from Proposition \ref{prop:abs_diffx_dist}. In both cases, Equations (\ref{eq:stochfirst}) to (\ref{eq:stochlast}) can be used to carry out the simulation. Other quantities, such as $\sigma_u$, $\sigma_x$ and $\mu_{\Delta x_S}$, can be replaced by their sample estimates. \par
\indent Table \ref{tab:cv_hausman} provides simulated critical values for the case $\sigma_u = 1$, $\sigma_x=1$ and $\mu_{\Delta x_S} = \sqrt{2/\pi}$ with $n=5000$ and 10,000 replications.  \par  
\begin{table}[h]
    \begin{center}
    \begin{tabular}{c|c}
        \toprule
        $\alpha$ & Critical Value \\ \midrule
        0.01\% & 2.5675 \\
        0.05\% & 2.5067 \\
        0.10\% & 2.5014 \\
       \bottomrule 
    \end{tabular}
       \caption{Simulated Critical Values of the EwPO Hausman-type Endogeneity test with $w_{ij} = |\Delta x_{ij}|$. } \label{tab:cv_hausman}
    \end{center}
\end{table}
\indent The power of the test depends on the exact relation between $x_i$ and $u_i$ as well as the choice of $w_{ij}$ and is an interesting issue for further research.

\section{Conclusion}
Endogeneity has been an enduring problem in econometrics. The practice to compensate for the information loss due to the correlation between the explanatory variable(s) and the disturbance terms has been to use additional information in the form of instrumental variables or moment conditions for testing and consistent estimation. The problem with this approach is that the results hinge on the “quality” of this information. The common wisdom to the day is that testing for endogeneity is not possible based exclusively on the information provided by the observations of the variables a model. This paper through the introduction of a new estimation method, the so-called estimation with pairwise observations (EwPO), demonstrates that this belief is misconceived and introduces three different direct endogeneity tests, ready to be used in applications. The EwPO new approach may also open the door for further new research paths in econometric and statistical estimation theory.

\vskip 3cm
\Appendix{Monte Carlo Simulations of EwPO Estimation with Selected Weights} \label{app:MC-estimation}
{

This Appendix presents some Monte Carlo (MC) simulation results to examine the properties of some EwPO estimators with selected weights. Additional results can be found in the Online Supplement (see \cite{EwPO-Sup2023}). \par

The data generating process for the MC simulations is based on the model

\begin{equation*}
    y_i = \beta_0 + \beta_1x_i + u_i
\end{equation*}

\noindent and the MC experiments consider two possible distributions for $u_i$, namely 

\begin{enumerate}
    \item \(u_{i} \sim N(0,1)\),
    \item \(u_{i} \sim \) skewed normal distribution,
\end{enumerate}
where the skewed normal distribution is generated as
\begin{equation*}
    u_i = \xi + \lambda |v_i| + z_i,
\end{equation*}
\noindent with $\xi=-\lambda\sqrt{\frac{2}{\pi}}$, $v_i\sim N(0,1)$ and $z_i\sim N(0, \sigma^2)$ such that $v_i$ and $w_i$ are independently distributed. \par

The MC experiments consider uniform distribution U(-10,10) for the regressor $x_i$. 


Two parameter vectors have been considered $(\beta_0, \beta_1) = (1,0.5)$ and $(\beta_0, \beta_1) = (1, 1.5)$ for purposes of robustness checking. The number of MC replications was 1000. For the sake of brevity, only the results for $(\beta_0, \beta_1) = (1,0.5)$ are reported here. The other results can be found in the Online Supplement  (\cite{EwPO-Sup2023}).








\begin{table}[H]
\centering
\begin{tabular}{|c|c|c|c|c|}
\hline
\multicolumn{1}{|c|}{Sorted MC} & \multicolumn{4}{c|}{Estimates and MC standard errors}                                             \\ \hline
\multirow{4}{*}{{n=50}}   & Parameter & Estimate/S.e. & OLS  & pairwise\\
\cline{2-5}  & 
\multirow{2}{*}{\(\hat{\beta_{0}}\)} & Estimate & 0.9866 & 0.9866\\ \cline{3-5} 
 & & S.e. & 0.1609&0.1609\\ \cline{2-5} 
& \multirow{2}{*}{\(\hat{\beta_{1}}\)} & Estimate & 0.4999&0.4999\\ \cline{3-5} 
 &  & S.e. &0.2928&0.2928\\ \hline
\multirow{4}{*}{{n = 500}}  & \multirow{2}{*}{\(\hat{\beta_{0}}\)}            & Estimate & 1.0001&1.0001\\ \cline{3-5} 
&   & S.e. & 0.0504&0.0504\\ \cline{2-5} 
& \multirow{2}{*}{\(\hat{\beta_{1}}\)}
& Estimate & 0.4971&0.4971\\ \cline{3-5} 
& & S.e. & 0.0948&0.0948\\ \hline
\multirow{4}{*}{{n = 5000}}   & \multirow{2}{*}{\(\hat{\beta_{0}}\)}            & Estimate & 0.9992&0.9992\\ \cline{3-5} 
&  & S.e. & 0.0169&0.0169\\ \cline{2-5} 
& \multirow{2}{*}{\(\hat{\beta_{1}}\)} & Estimate & 0.4997&0.4997\\\cline{3-5} 
&  & S.e. & 0.0281&0.0281\\ \hline
\end{tabular}
\caption{Sorted -- full-pairwise MC, $x_i$ $\sim U(-10,10)$, $u_i \sim $ skewed normal distribution, 
 $\Delta x$ weighted estimator}
\label{table:1}
\end{table}

\begin{table}[H]
\centering
\begin{tabular}{|c|c|c|c|c|}
\hline
\multicolumn{1}{|c|}{Sorted MC} & \multicolumn{4}{c|}{Estimates and MC standard errors}                                             \\ \hline
\multirow{4}{*}{{n=50}}   & Parameter & Estimate/S.e. & OLS  & pairwise\\
\cline{2-5}  & 
\multirow{2}{*}{\(\hat{\beta_{0}}\)} & Estimate & 0.9866&0.9866\\ \cline{3-5} 
 & & S.e. & 0.1609&0.1609\\ \cline{2-5} 
& \multirow{2}{*}{\(\hat{\beta_{1}}\)} & Estimate & 0.4999&0.4999\\ \cline{3-5} 
 &  & S.e. & 0.2928&0.2928\\ \hline
\multirow{4}{*}{{n = 500}}  & \multirow{2}{*}{\(\hat{\beta_{0}}\)}            & Estimate & 1.0001&1.0001\\ \cline{3-5} 
&   & S.e. & 0.0504&0.0504\\ \cline{2-5} 
& \multirow{2}{*}{\(\hat{\beta_{1}}\)}
& Estimate & 0.4971&0.4971\\ \cline{3-5} 
& & S.e. & 0.0948&0.0948\\\hline
\multirow{4}{*}{{n = 5000}}   & \multirow{2}{*}{\(\hat{\beta_{0}}\)}            & Estimate & 0.9992&0.9992\\ \cline{3-5} 
&  & S.e. & 0.0169&0.0169\\ \cline{2-5} 
& \multirow{2}{*}{\(\hat{\beta_{1}}\)} & Estimate & 0.4997&0.4997\\ \cline{3-5} 
&  & S.e. & 0.0281&0.0281\\ \hline
\end{tabular}
\caption{Non-sorted full-pairwise MC, $x_i$ $\sim U(-10,10)$, $u_i \sim $ skewed normal distribution, 
 $\Delta x$ weighted estimator}
\label{table:2}
\end{table}
Note: No mistake, the sorted and non-sorted full-pairwise estimation results given in Tables \ref{table:1} and \ref{table:2} are exactly the same. 

\begin{table}[H]
\centering
\begin{tabular}{|c|c|c|c|c|}
\hline
\multicolumn{1}{|c|}{Non-sorted MC} & \multicolumn{4}{c|}{Estimates and MC standard errors}                                             \\ \hline
\multirow{4}{*}{{n=50}}   & Parameter & Estimate/S.e. & OLS  & pairwise\\
\cline{2-5}  & 
\multirow{2}{*}{\(\hat{\beta_{0}}\)} & Estimate & 1.0009 & 0.9967 \\ \cline{3-5} 
 & & S.e. & 0.1408 & 0.1997 \\ \cline{2-5} 
& \multirow{2}{*}{\(\hat{\beta_{1}}\)} & Estimate & 0.5007 & 0.5018 \\ \cline{3-5} 
 &  & S.e. & 0.0251 & 0.0291 \\ \hline
\multirow{4}{*}{{n = 500}}  & \multirow{2}{*}{\(\hat{\beta_{0}}\)}            & Estimate & 0.9967 & 0.9966 \\ \cline{3-5} 
&   & S.e. & 0.0446 & 0.0640 \\ \cline{2-5} 
& \multirow{2}{*}{\(\hat{\beta_{1}}\)}
& Estimate & 0.5001 & 0.4998 \\ \cline{3-5} 
& & S.e. & 0.0081 & 0.0095 \\ \hline
\multirow{4}{*}{{n = 5000}}   & \multirow{2}{*}{\(\hat{\beta_{0}}\)}            & Estimate & 1.0001 & 1.0004 \\ \cline{3-5} 
&  & S.e. & 0.0144  & 0.0192 \\ \cline{2-5} 
& \multirow{2}{*}{\(\hat{\beta_{1}}\)} & Estimate & 0.4999 & 0.4999 \\ \cline{3-5} 
&  & S.e. & 0.0025 & 0.0031 \\ \hline
\end{tabular}
\caption{Non-sorted adjacent MC, $x_i \sim U(-10,10)$, $ u_i \sim N(0,1)$,
 $|\Delta x|$ weighted estimator}
\label{table:3}
\end{table}

\begin{table}[H]
\centering
\begin{tabular}{|c|c|c|c|c|}
\hline
\multicolumn{1}{|c|}{Full-pairwise MC} & \multicolumn{4}{c|}{Estimates and MC standard errors}                                             \\ \hline
\multirow{4}{*}{{n=50}}   & Parameter & Estimate/S.e. & OLS  & pairwise\\
\cline{2-5}  & 
\multirow{2}{*}{\(\hat{\beta_{0}}\)} & Estimate & 1.0009 & 1.0009 \\ \cline{3-5} 
 & & S.e. & 0.1408 & 0.1408 \\ \cline{2-5} 
& \multirow{2}{*}{\(\hat{\beta_{1}}\)} & Estimate & 0.5007 & 0.5007 \\ \cline{3-5} 
 &  & S.e. & 0.0251 & 0.0251 \\ \hline
\multirow{4}{*}{{n = 500}}  & \multirow{2}{*}{\(\hat{\beta_{0}}\)}            & Estimate & 0.9967 & 0.9967 \\ \cline{3-5} 
&   & S.e. &  0.0446 & 0.0446 \\ \cline{2-5} 
& \multirow{2}{*}{\(\hat{\beta_{1}}\)}
& Estimate & 0.5001 & 0.5001 \\ \cline{3-5} 
& & S.e. & 0.0081 & 0.0081 \\ \hline
\multirow{4}{*}{{n = 5000}}   & \multirow{2}{*}{\(\hat{\beta_{0}}\)}            & Estimate & 1.0001 & 1.0001 \\ \cline{3-5} 
&  & S.e. & 0.0144 & 0.0144 \\ \cline{2-5} 
& \multirow{2}{*}{\(\hat{\beta_{1}}\)} & Estimate & 0.4999 & 0.4999 \\ \cline{3-5} 
&  & S.e. & 0.0025 & 0.0025 \\ \hline
\end{tabular}
\caption{Sorted full-pairwise MC,  $ x_i \sim U(-10,10)$, $ u_i \sim N(0,1)$,
 $\Delta x$ weighted estimator}
\label{table:4}
\end{table}




\Appendix{Monte Carlo Simulations for the Covariance Test} \label{app:MC-test}

The MC setup considers again sample sizes $n=50$, $500$ and $5000$ with $1000$ replications. 

\begin{enumerate}[{Step} 1.]
\item Generated the same model as above with one explanatory variable:
$$
y_i = \beta_0 + \beta_1x_i + u_i
$$
with $\beta_0=1$, $\beta_1=0.5$ to star with, and $u_i$ was generated as
$N(0, 1)$. Finally, $x$ was generated as $N(0, 5)$ and also as $U(-5, 5)$. \par

\indent The simulations of $x_i$ and $u_i$ were carried out under four different correlations, namely $\rho=0$ (benchmark exogeneity), $\rho=0.2$ (small), $\rho=0.5$ (medium), and $\rho=0.8$ (large).

\item Estimate the model with EwPO with $w_{ij} = \Delta x_{ij}$ and $w_{ij} =|\Delta x_{ij}|$. In each case, calculate the test statistics as defined in Equation (\ref{eq:testStatistics}). 

\end{enumerate}

For further simulation results please refer to the Online Supplement (\cite{EwPO-Sup2023}).

\begin{table}[H]
\centering
\begin{tabular}{|c|c|c|c|c|c|}
\hline
\multicolumn{1}{|c|}{Full-pairwise MC} & \multicolumn{5}{c|}{Average test statistics}                                             \\ \hline
\multirow{4}{*}{{n=50}}  & Parameter & Pairwise & \makecell{Standard\\deviation} & Skewness & Kurtosis\\
\cline{2-6}  & 
Exogen &  -0.0003  & 0.7842 & -0.1804 & 3.1856\\ 
\cline{3-6} 
& $\rho=0.2$&  -0.5478  & 0.8647 & 0.0273 & 3.1118 \\ 
\cline{3-6} 
& $\rho=0.5$ &  -0.6175  & 0.5483 & 0.0593 & 3.0018\\ 
\cline{3-6} 
& $\rho=0.8$&  -1.2625  & 0.4008 & 0.0122 & 3.1264\\ 
\hline

\multirow{4}{*}{{n = 500}} & \multirow{1}{*}{Exogen} & -0.0058   & 0.2175 & -0.0497 & 3.0011 \\ \cline{3-6} 
& $\rho=0.2$ &  -0.4241  & 0.2075 & -0.1293 & 3.0698 \\ 
\cline{3-6} 
& $\rho=0.5$ &  -1.1112  & 0.2017 & 0.0756 & 3.0814 \\ 
\cline{3-6} 
& $\rho=0.8$ &  -1.5581 & 0.1311 & -0.0891 & 2.8930\\ 
\hline

\multirow{4}{*}{{n = 5000}}   & Exogen  & -0.0031  & 0.0666 & 0.0101 & 2.8834 \\ \cline{3-6} 
& $\rho=0.2$ & -0.3995  & 0.0684 & 0.0123 & 3.1566\\ \cline{3-6} 
& $\rho=0.5$ &   -0.9737  & 0.0604 & -0.0610 & 2.9180 \\ \cline{3-6} 
& $\rho=0.8$ & -1.5681  & 0.0421 & 0.0637 & 3.0883\\ \cline{3-6} 
\hline
\end{tabular}

\caption{Average test statistics, full-pairwise MC,  $x_i \sim N(0,5)$, $\Delta x$ weighted estimator}
\label{table:n_abs1}
\end{table}

\begin{figure}[ht!]
\centering	
\includegraphics[scale=0.675]{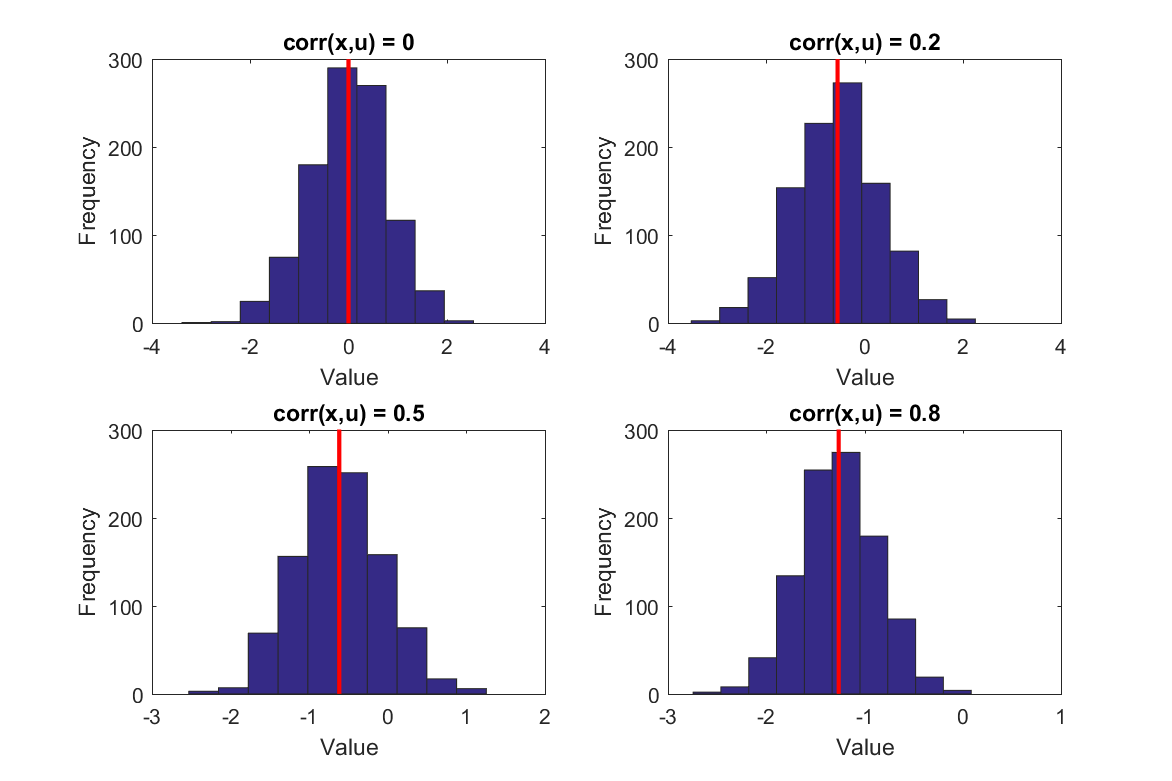}
		
\caption{Distribution of the test-statistics,  $x_i \sim$ N(0,5), $\Delta x$ weighted full-pairwise estimator, $n=50$}
\label{TestStatDist_Norm_noabs_50}
\end{figure}
\begin{figure}[bt!]
\centering	
\includegraphics[scale=0.675]{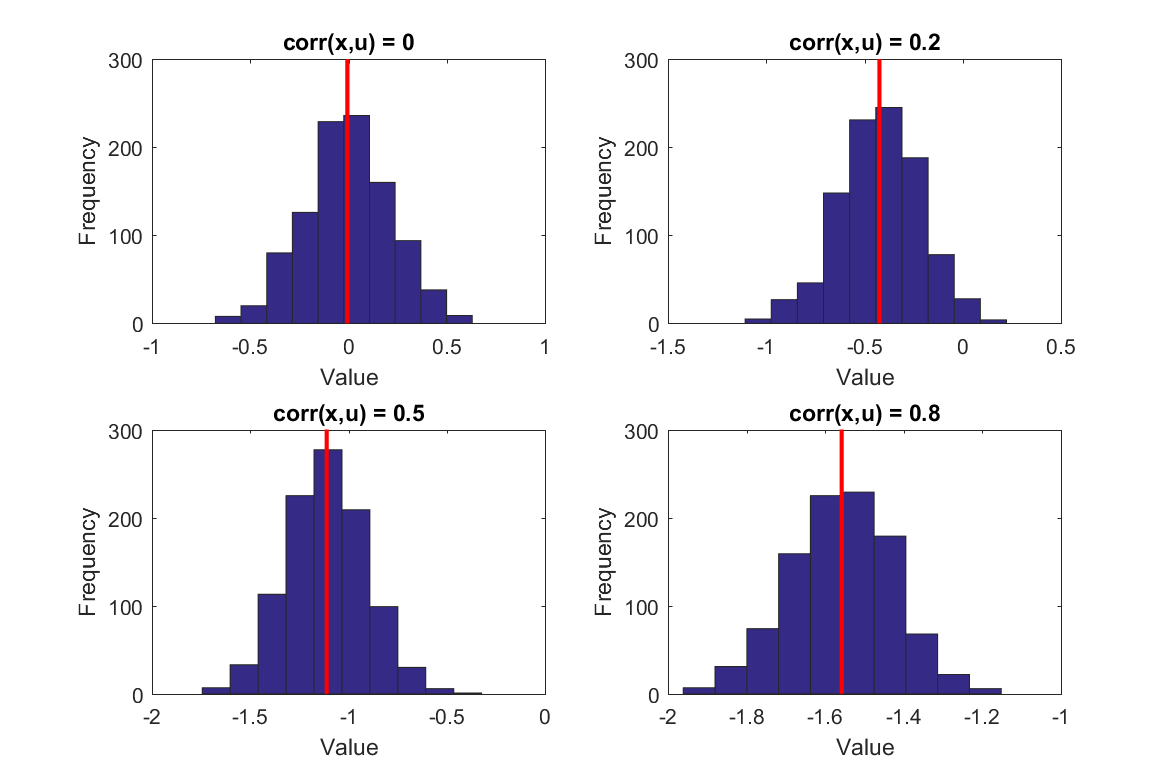}
		
\caption{Distribution of the test-statistics,  $ x_i \sim$ N(0,5), $\Delta x$ weighted full-pairwise estimator, $n=500$}
\label{TestStatDist_Norm_noabs_500}
\end{figure}

\begin{figure}[ht]
\centering	
\includegraphics[scale=0.675]{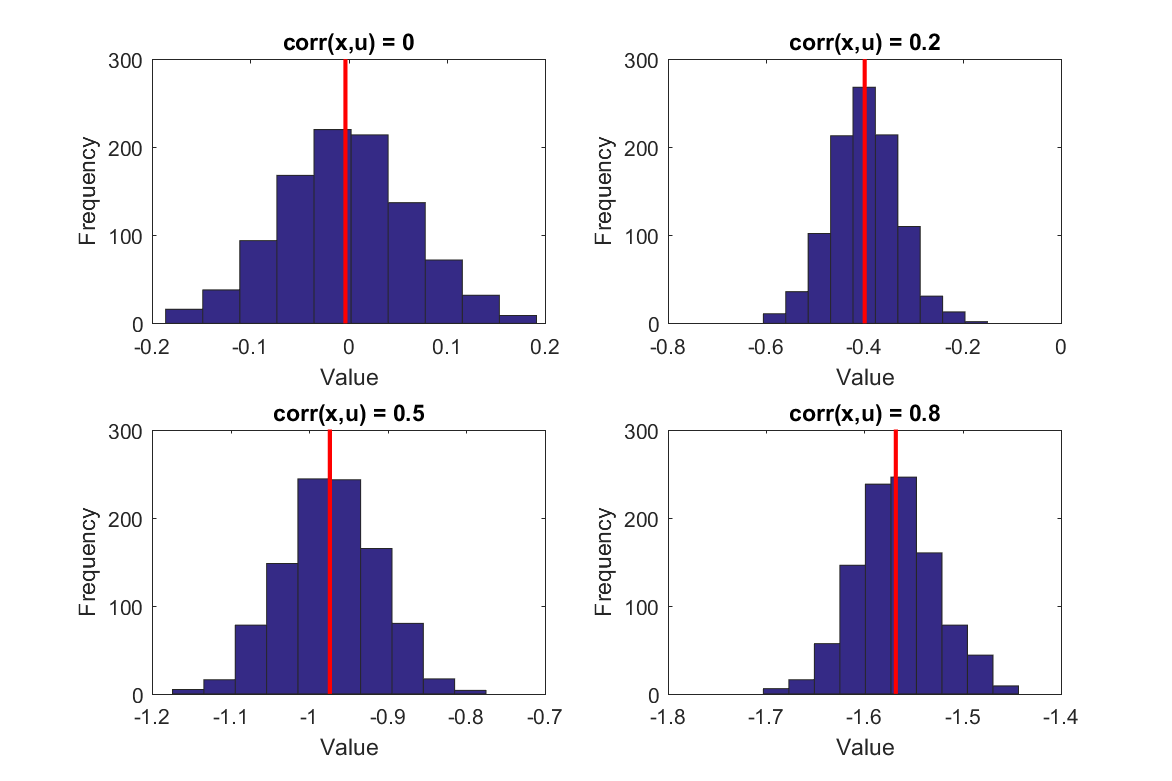}
		
\caption{Distribution of the test-statistics,  $x_i \sim$ N(0,5), $\Delta x$ weighted full-pairwise estimator, $n=5000$}
\label{TestStatDist_Norm_noabs_5000}
\end{figure}

\clearpage
\begin{table}[H]
\centering
\begin{tabular}{|c|c|c|c|c|c|}
\hline
\multicolumn{1}{|c|}{\makecell{Full-pairwise MC}} & \multicolumn{5}{c|}{Average test statistics}                                             \\ \hline
\multirow{4}{*}{{n=50}}  & Parameter & Pairwise & \makecell{Standard\\deviation} & Skewness & Kurtosis\\
\cline{2-6}  & 
Exogen &  0.0060  & 0.8890 & 0.0382 & 3.2387 \\ 
\cline{3-6} 
& $\rho=0.2$& -0.9247   & 0.7920 & -0.0244 & 2.9620\\ 
\cline{3-6} 
& $\rho=0.5$ &  -2.6616  & 0.7437 & 0.0121 & 3.2542\\ 
\cline{3-6} 
& $\rho=0.8$& -5.9118  & 0.5729 & -0.0305 & 3.0115\\ 
\hline

\multirow{4}{*}{{n = 500}} & \multirow{1}{*}{Exogen}  & -0.0234  & 0.2750 & -0.1429 & 3.1852\\ \cline{3-6} 
& $\rho=0.2$ & -1.0846   & 0.2471 & 0.0120 & 2.7804\\ 
\cline{3-6} 
& $\rho=0.5$ & -2.9665  & 0.2309 & 0.1625 & 3.2859 \\ 
\cline{3-6} 
& $\rho=0.8$ &  -4.2625   & 0.1494 & 0.0345 & 3.1253\\ 
\hline

\multirow{4}{*}{{n = 5000}}   & Exogen  & -0.0036 & 0.0785 & 0.0088 & 2.8812
  \\ \cline{3-6} 
& $\rho=0.2$ & -1.1328 & 0.0786 & -0.0625 & 3.1951
 \\ \cline{3-6} 
& $\rho=0.5$ &  -2.8321 & 0.0710 & -0.0394 & 2.9287
  \\ \cline{3-6} 
& $\rho=0.8$ &  -4.5415 & 0.0494 & 0.1022 & 3.0687
 \\ \cline{3-6} 
\hline
\end{tabular}


\caption{Average test statistics, full-pairwise MC,  $x_i \sim U(-5,5)$, $\Delta x$ weighted estimator}
\label{table:n_abs2}
\end{table}
\vskip 1cm
\begin{table}[H]
\centering
\begin{tabular}{|c|c|c|c|c|c|}
\hline
\multicolumn{1}{|c|}{Full-pairwise MC} & \multicolumn{5}{c|}{Average test statistics}                                             \\ \hline
\multirow{4}{*}{{n=50}}  & Parameter & Pairwise & \makecell{Standard\\deviation} & Skewness & Kurtosis\\
\cline{2-6}  &  Exogen  & 0.0191  & 0.8476 & -0.1264 & 3.1170\\
\cline{3-6} 
& $\rho=0.2$ & -1.2884  & 0.8520 & -0.1054 & 2.8218 \\ 
\cline{3-6} 
& $\rho=0.5$ & -2.8584   & 0.7488 & -0.0452 & 2.9483\\ 
\cline{3-6} 
& $\rho=0.8$  & -6.9515   & 0.6224 & -0.0069 & 3.2675\\ 
\hline

\multirow{4}{*}{{n = 500}} & \multirow{1}{*}{Exogen}  &  0.0010  & 0.2608 & -0.0092 & 2.9270
  \\ \cline{3-6} 
& $\rho=0.2$  & -1.2375  & 0.2588 & 0.0632 & 2.7668
 \\ \cline{3-6} 
& $\rho=0.5$  & -2.7598 & 0.2218 & -0.0266 & 2.9173
  \\ \cline{3-6} 
& $\rho=0.8$  & -4.3437  & 0.1479 & 0.0420 & 2.9936
 \\ 
 \hline

\multirow{4}{*}{{n = 5000}}   & Exogen  & -0.0003  & 0.0784 & 0.1196 & 2.8769
  \\ \cline{3-6} 
& $\rho=0.2$  & -1.1162  & 0.0790 & -0.1253 & 2.8993
 \\ \cline{3-6} 
& $\rho=0.5$  & -2.8175  & 0.0703 & -0.1483 & 3.0588
  \\ \cline{3-6} 
& $\rho=0.8$ & -4.5073  & 0.0490 & -0.0011 & 2.9966
 \\ \cline{3-6} 
\hline
\end{tabular}

\caption{Average test statistics, full-pairwise MC,  $x_i \sim U(-5,5)$, $|\Delta x|$ weighted estimator}
\label{table:n_abs3}
\end{table}

\pagebreak
\clearpage

\Appendix{Monte Carlo Results of the Residual Test and the Consistent Estimation} \label{app:residual}
The DGP in this case is 
\begin{equation*}
    y_i = 0.5x_i + u_i 
\end{equation*}
\noindent where $x_i \sim N(5,2)$ and $u_i \sim N(0,1)$. The MC experiments considered no correlation and $\rho = 0.2, 0.5$ and 0.8 with sample sizes ranging from 50 to 5000 for each case. 
\begin{table}[b!]
    \begin{center}
        \begin{tabular}{|c|c|c|c|c|c|}
        \hline
        Sample size&Correlation&Mean&Variance&Skewness&Kurtosis\\ \hline
        \multirow{4}{*}{n=50}&Exogen&0.0044&0.1604&-0.0176&0.1476\\ \cline{3-6}
        &$\rho = 0.2$&-0.4930&0.1546&-0.0546&0.0601\\ \cline{3-6}
        &$\rho =0.5$&-1.2506&0.1180&-0.0113&0.1834\\ \cline{3-6}
        &$\rho =0.8$&-2.0033&0.0606&0.0223&0.1587\\ 
        \hline
        \multirow{4}{*}{n=500}&Exogen&0.0032&0.0149&0.0839&0.0023\\ \cline{3-6}
        &$\rho =0.2$&-0.5005&0.0142&0.0264&-0.0169\\ \cline{3-6}
        &$\rho =0.5$&-1.2517&0.0118&0.023&0.1509\\ \cline{3-6}
        &$\rho =0.8$&-2.0015&0.0051&-0.0321&-0.2207\\
        \hline
        \multirow{4}{*}{n=1000} &Exogen&0.0009&0.0075&-0.0792&0.0818\\ \cline{3-6} 
        &$\rho =0.2$&-0.5027&0.0073&-0.0534&0.1680\\ \cline{3-6} 
        &$\rho =0.5$&-1.2474&0.0056&0.0687&0.0894\\ \cline{3-6}
        &$\rho =0.8$&-2.0003&0.0028&0.0432&-0.0320\\\cline{3-6} \hline
        \multirow{4}{*}{n=5000} & Exogen&0.0003&0.0015&-0.0341&0.1900\\\cline{3-6} 
        &$\rho =0.2$&-0.4993&0.0014&0.0541&0.1994\\ \cline{3-6}
        &$\rho =0.5$&-1.2507&0.0012&0.0573&0.1016\\ \cline{3-6}
        &$\rho =0.8$&-2.0003&0.0005&0.0722&-0.0429\\ 
        \hline
        \end{tabular}
        \caption{MC Results on the Distributions of $\hat{u}_i$.} \label{tab:meanuTest}
    \end{center}

\end{table}

\begin{table}[ht]
\begin{center}
    \begin{tabular}{|c|c|c|c|c|c|}
        \hline
        Sample size&Correlation&Mean&Variance&Skewness&Kurtosis\\ \hline
        \multirow{4}{*}{$n=50$} &Exogen&0.4993&0.0056&0.0584&0.2268\\ 
        &$\rho=0.2$&0.5984&0.0054&0.0641&0.067\\ 
        &$\rho=0.5$&0.7505&0.004&0.013&0.0595\\ 
        &$\rho=0.8$&0.9001&0.0021&-0.0344&0.3254\\ 
        \hline
        \multirow{4}{*}{$n=500$} &Exogen&0.4994&0.0005&-0.0126&0.0119\\ 
        &$\rho=0.2$&0.5998&0.0005&0.0287&-0.1062\\ 
        &$\rho=0.5$&0.7505&0.0004&-0.0401&0.0729\\ 
        &$\rho=0.8$&0.9003&0.0002&0.0101&-0.1041\\ 
        \hline
        \multirow{4}{*}{$n=1000$} &Exogen&0.4998&0.0003&0.0331&0.0474\\ 
        &$\rho=0.2$&0.6006&0.0003&0.0736&0.2013\\ 
        &$\rho=0.5$&0.7498&0.0002&-0.0369&-0.0212\\ 
        &$\rho=0.8$&0.9001&0.0001&-0.0594&-0.0057\\ \hline
        \multirow{4}{*}{$n=5000$} &Exogen&0.5000&0.0001&0.0240&0.0152\\ 
        &$\rho=0.2$&0.5999&4.877e-5&-0.0424&0.1421\\ 
        &$\rho=0.5$&0.7502&4.055e-5&-0.0478&0.2321\\ 
        &$\rho=0.8$&0.9000&1.850e-5&-0.0246&-0.0227\\ 
        \hline
    \end{tabular}
    \end{center}
    \caption{Monte Carlo Results on $\hat{\beta}$} \label{tab:biasbeta}
\end{table}

\begin{table}[H]
\begin{center}
\begin{tabular}{|c|c|c|c|c|c|}
    \hline
    Sample size&Correlation&Mean&Variance&Skewness&Kurtosis\\ \hline
    \multirow{4}{*}{$n=50$}&Exogen&0.5002&0.0008&0.0503&-0.0639\\ 
    &$\rho=0.2$&0.4998&0.0008&0.0861&0.2677\\ 
    &$\rho=0.5$&0.5004&0.0006&0.0015&-0.0670\\ 
    &$\rho=0.8$&0.4995&0.0003&0.0377&0.0951\\ 
    \hline
    \multirow{4}{*}{$n=500$} &Exogen&0.5000&8.031e-5&-0.0401&0.0146\\ 
    &$\rho=0.2$&0.4997&7.297e-5&-0.1208&0.0535\\ 
    &$\rho=0.5$&0.5002&6.249e-5&0.0534&0.0321\\ 
    &$\rho=0.8$&0.5000&2.946e-5&0.0825&0.1877\\ 
    \hline
    \multirow{4}{*}{$n=1000$}&Exogen&0.5000&3.975e-5&-0.0374&-0.1315\\ 
    &$\rho=0.2$&0.5000&3.871e-5&0.0496&-0.0301\\ 
    &$\rho=0.5$&0.5003&2.94e-5&-0.1190&0.0958\\ 
    &$\rho=0.8$&0.5001&1.435e-5&0.0450&0.1941\\ 
    \hline
    \multirow{4}{*}{$n=5000$}& Exogen&0.5000&8.225e-6&0.0703&-0.0280\\ 
        &$\rho=0.2$&0.5000&7.724e-6&0.0096&0.0661\\ 
        &$\rho=0.5$&0.5000&5.731e-6&-0.0170&-0.0944\\ 
        &$\rho=0.8$&0.5000&2.933e-6&0.0195&0.0369\\ 
    \hline
\end{tabular}
    \caption{Bias-corrected $\hat{\beta}$} \label{tab:biasedCorrected}
\end{center}
\end{table}

\vfill\eject
\Appendix{Mathematical Appendix}  \label{app:proof}
\noindent \textbf{Proof of Proposition \ref{prop:dist_x}}. \par
Define  
\begin{align}
    B_n(\lambda) =& n^{-1} \sum_{i=1}^{\floor{\lambda n}} \frac{ x_i - \mu_x }{\sigma_x} \\
    W_{n}(\lambda) =& n^{-1} \sum_{i=1}^{\floor{\lambda n}} \frac{u_i}{\sigma_u} \\
\end{align}
\noindent where $1/n \leq \lambda \leq 1$ and $\floor{x}$ denotes the largest integer contains in $x$. Under Assumptions (\ref{ass:moment_x}) to (\ref{ass:mixingale}) $B_n(\lambda) \overset{d}{\rightarrow} B(\lambda)$ and $W_n(\lambda) \overset{d}{\rightarrow} W(\lambda)$ following the arguments in \cite{mcleish_invariance_1975}.  \par
\indent The proof first identifies the asymptotic distribution of each term in Equation (\ref{eq:testStatistics}) and the result follow by an application of the Continuous Mapping Theorem as shown in \cite{billingsley:1999}. From Equation (\ref{eq:partialSum_DeltaXij}), it is straightforward to show that 
\begin{equation} \label{eq:prop1_c1}
    n^{-\frac{3}{2}} \sigma_x^{-1} \sum^n_{i=2}\sum^{i-1}_{j=1} \Delta x_{ij} \overset{d}{\rightarrow} B(1) - 2\int^1_0 B(\lambda) d\lambda 
\end{equation}
\noindent and follow the same argument, it can also be shown that
\begin{equation}\label{eq:prop1_c2}
    n^{-\frac{3}{2}} \sigma_u^{-1} \sum^n_{i=2}\sum^{i-1}_{j=1} \Delta u_{ij} \overset{d}{\rightarrow} W(1) - 2\int^1_0 W(\lambda) d\lambda. 
\end{equation} 
\noindent Now, note that $\Delta x_{pq}^2 = \left [ (x_p - \mu_x) - (x_q-\mu_x) \right ]^2$  and
\begin{align*}
    \sum^n_{i=2}\sum^{i-1}_{j=1} & \Delta x^2_{pq} = \sigma^2_x \sum^n_{i=2}\sum^{i-1}_{j=1} \left ( \frac{x_i - \mu_x}{\sigma_x} - \frac{x_j-\mu_x}{\sigma_x} \right )^2 \\
    =& \sigma^2_x  \left [ \sum^n_{i=2}\sum^{i-1}_{j=1} \left ( \frac{x_i - \mu_x}{\sigma_x} \right )^2 + \sum^n_{i=2}\sum^{i-1}_{j=1} \left ( \frac{x_j-\mu_x}{\sigma_x} \right )^2 -  \sum^n_{i=2}\sum^{i-1}_{j=1} \frac{ (x_i-\mu_x)(x_j-\mu_x)}{\sigma^2_x} \right ] \\
    =& \sigma^2_x n^2 \left ( B^2_n(1) - 2\int^1_0 B^2_n(\lambda) d\lambda \right ) 
\end{align*}
\noindent The last line follows from a similar argument as Equations (\ref{eq:prop1_c1}) and (\ref{eq:prop1_c2}). Therefore, 
\begin{equation} \label{eq:prop1_c3}
    n^{-2} \sum^n_{p=2}\sum^{p-1}_{q=1} \Delta x^2_{pq} \overset{d}{\rightarrow} \sigma^2_x \left ( B^2(1) - 2 \int^1_0 B^2(\lambda) d\lambda \right ).  
\end{equation}
\noindent For the last term, repeat the argument above and note that $u_p = W_n(\frac{p}{n})-W_n(\frac{p-1}{n})$ 
\begin{align*}
  &\sum^n_{p=2}\sum^{p-1}_{q=1} \Delta x_{pq}\Delta u_{pq} \\
  =&  \sum^n_{p=2}\sum^{p-1}_{q=1} x_pu_p + x_qu_q -x_pu_q -x_qu_p \\ 
  =& n^2 \sigma_x\sigma_u \left ( B_n(1)W_n(1) + \int^1_0 B_n(\lambda) W_n(\lambda) d\lambda - \int^1_0 W_n(\lambda)dB_n(\lambda) - \int^1_0 B_n(\lambda)dW_n(\lambda) \right ). 
\end{align*}
\noindent Therefore 
\begin{equation} \label{eq:prop1_c4}
\begin{split}
    &n^{-2}  \sum^n_{p=2}\sum^{p-1}_{q=1} \Delta x_{pq}\Delta u_{pq} \\
    &\overset{d}{\rightarrow} \sigma_x\sigma_u \left ( B(1)W(1) + \int^1_0 B(\lambda)W(\lambda) d\lambda -\int^1_0B(\lambda)dW(\lambda) - \int^1_0W(\lambda)dB(\lambda) \right ). 
\end{split}
\end{equation}
\noindent Combining Equations (\ref{eq:prop1_c1}), (\ref{eq:prop1_c2}), (\ref{eq:prop1_c3}) and (\ref{eq:prop1_c4}) and apply the Continuus Mapping Theorem gives the result. This completes the proof.  $\blacksquare$

\noindent \textbf{Proof of Proposition \ref{prop:abs_diffx_dist}}.  \par

The key to the proof is the fact that $\sum^n_{i=2} \sum^{i-1}_{j=1} |\Delta x_{ij}| = \sum^n_{i=2} \sum^{i-1}_{j=1} \Delta x_{s,ij}$. Under Assumption \ref{ass:exogeneity}, the result as stated in Equation (\ref{eq:prop1_c2}) still holds since reordering $x_i$ does not affect the distribution of $u_i$. Thus 

\begin{equation} \label{eq:prop3_c1}
\begin{split}
   & \frac{2}{n(n-1)} \sum^n_{i=1}\sum^{i-1}_{j=1} |\Delta x_{ij}| \\
   =&\frac{2}{n(n-1)} \sum^n_{i=1}\sum^{i-1}_{j=1} \Delta x_{s,ij} \\
    =& \mu_{|\Delta x_s|} + o_p(1) 
\end{split}
\end{equation}

\noindent by the Law of Large Number for triangular arrays. Recall

\begin{equation*}
    \beta - \hat{\beta} = \left ( \sum^n_{i=1}\sum^{i-1}_{j=1} |\Delta x_{ij}| \right )^{-1} \left ( \sum^n_{i=1}\sum^{i-1}_{j=1} \Delta u_{ij} \right )
\end{equation*}

\noindent Hence, combining Equations (\ref{eq:prop1_c2}) and (\ref{eq:prop3_c1}) and apply the Continuous Mapping Theorem gives the result. This completes the proof. $\blacksquare$

\noindent \textbf{Proof of Proposition \ref{prop:abs_diffx_exotest}} \par
The proof follows the same argument as Proposition \ref{prop:dist_x}. Proposition \ref{prop:abs_diffx_dist} shows that the rate of convergence for EwPO with $w_{ij} = |\Delta x_{ij}|$ is $\sqrt{n}$ and therefore, 

\begin{equation*}
    S(\bfw) = \sum^n_{p=1}\sum^{p-1}_{q=1} \Delta u_{pq} \Delta x_{pq} + o_p(1)
\end{equation*}
\noindent since EwPO is a consistent estimator under Assumption \ref{ass:exogeneity} i.e., $\Delta \hat{u}_{pq} = \Delta u_{pq} + o_p(1)$. Therefore the result follows directly from Equation (\ref{eq:prop1_c4}). This completes the proof. $\blacksquare$ \par

\noindent \textbf{Proof of Proposition \ref{prop:hausman}} \par
Using the results from Proposition \ref{prop:abs_diffx_dist}, the proof follows the same argument as Theorem 1 in \cite{phillips_understanding_1986}. This completes the proof.  $\blacksquare$

\vskip 3cm
\bibliographystyle{chicago} \bibliography{ref} 

\begin{thebibliography}{}

\bibitem[\protect\citeauthoryear{Billingsley}{Billingsley}{1999}]{billingsley:1999}
Billingsley, P. (1999).
\newblock {\em Convergence of Probability Measures, Second Edition}.
\newblock Wiley Inter-Science.

\bibitem[\protect\citeauthoryear{Chan and Mátyás}{Chan and
  Mátyás}{2023}]{EwPO-Sup2023}
Chan, F. and L.~Mátyás (2023).
\newblock {Estimation with Pairwise Observations -- Online Supplement}.
\newblock {\em Github Repository\/}.
\newblock \href{https://github.com/rkristof96/EwPO}{Click here to navigate to
  the online supplement.}

\bibitem[\protect\citeauthoryear{Hausman}{Hausman}{1978}]{Hausmann1978}
Hausman, J.~A. (1978).
\newblock Specification tests in econometrics.
\newblock {\em Econometrica\/}, 1251--1271.

\bibitem[\protect\citeauthoryear{Johansen}{Johansen}{1995}]{johansen:1995}
Johansen, S. (1995).
\newblock {\em Likelihood-based Inference in Cointegrated Vector
  Auto-regressive Models}.
\newblock Oxford University Press.

\bibitem[\protect\citeauthoryear{McLeish}{McLeish}{1975}]{mcleish_invariance_1975}
McLeish, D.~L. (1975).
\newblock Invariance {Principles} for {Dependent} {Variables}.
\newblock {\em Zeitschrift Fur Wahrscheinlichkeitstheorie Und Verwandte
  Gebiete\/}~{\em 32\/}(3), 165--178.

\bibitem[\protect\citeauthoryear{Phillips}{Phillips}{1986}]{phillips_understanding_1986}
Phillips, P. C.~B. (1986).
\newblock Understanding spurious regressions in econometrics.
\newblock {\em Journal of Econometrics\/}~{\em 33\/}(3), 311--340.
\newblock Publisher: Elsevier.

\bibitem[\protect\citeauthoryear{Wooldridge}{Wooldridge}{2002}]{Wooldridge2002}
Wooldridge, J.~M. (2002).
\newblock {\em Econometric Analysis of Cross Section and Panel Data}.
\newblock MIT Press.

\bibitem[\protect\citeauthoryear{Wooldridge and White}{Wooldridge and
  White}{1988}]{wooldridge_invariance_1988}
Wooldridge, J.~M. and H.~White (1988).
\newblock Some invariance principles and central limit theorems for dependent
  heterogeneous processes.
\newblock {\em Econometric Theory\/}, 210--230.
\newblock Publisher: JSTOR.

\end{thebibliography}

\end{document}